\documentclass[twocolumn]{revtex4-2}

\usepackage{graphicx,indentfirst,csquotes,amssymb,amsthm,amsmath,mathtools}
\usepackage{xcolor,paralist,hyperref,titlesec,fancyhdr,etoolbox}
\newtheorem{theorem}{Theorem}[]
\newtheorem{definition}[theorem]{Definition}

\newtheorem{lemma}[theorem]{Lemma}

\hypersetup{ colorlinks=true, linkcolor=black, filecolor=black, urlcolor=black }

\usepackage[normalem]{ulem}
\newcommand{\llangle}{\langle\hspace{-1.5pt}\langle}
\newcommand{\rrangle}{\rangle\hspace{-1.5pt}\rangle}

\begin{document}

\title{Topological building blocks of nonequilibrium response}

\author{Sean Fancher}
\affiliation{Department of Biophysics, University of Michigan, Ann Arbor, MI 48109, USA}
\author{Jordan M. Horowitz}
\affiliation{Department of Biophysics, University of Michigan, Ann Arbor, MI 48109, USA}
\affiliation{Center for the Study of Complex Systems, University of Michigan, Ann Arbor, MI 48109, USA}
\affiliation{Department of Physics, University of Michigan, Ann Arbor, MI 48109, USA}




\begin{abstract}

Nonequilibrium systems can exploit energy to amplify their sensitivity to external stimuli, allowing them to be harnessed for a variety of functions in both engineered devices and living organisms.
An assortment of theoretical results capture different facets of this nonequilibrium amplification, including fluctuation-response relations as well as bounds and constraints that limit the potential behavior.
Here, we take a broader perspective, aiming for a systematic characterization of the full  range of possible response behaviors.
For a wide class of nonequilibrium dynamics,  we identify a collection of optimally sensitive models whose behavior is determined by the topology of the state space.
We further conjecture that every response can be written as a convex combination of these optimal models, thereby structuring the entire space of responses.
We use this geometric perspective to put sensitivity limits on nonmonotonic biochemical input-output functions, and identify all optimal kinetic schemes for the unordered binding of molecules among three sites.
\end{abstract}

\maketitle

Nonequilibrium systems are capable of a strikingly sensitive response to external stimuli.
Indeed, living systems offer a diverse array of examples where higher sensitivity correlates with improved function, such as more reliable biochemical signal transduction and sensing~\cite{Lan2012,Qian2012,Govern2014,Murugan2011,Sartori2013,Sartori2015}, higher fidelity molecular discrimination~\cite{Sartori2013b,Sartori2015b,Murugan2014,Wong2017,Morita2025}, increased responsiveness in biochemical switches~\cite{tu2008nonequilibrium,Mattingly2025}, and enhanced efficacy in transcriptional regulation~\cite{Bintu2005,Estrada2016,Tran2018,Park2019,Owen2023,Mahdavi2023}.
The broad relevance of these examples point to the need of a comprehensive theory of optimal nonequilibrium sensitivity.
Such a development would allow us to systematically explore mechanisms for sensitivity amplification and suggest unifying design principles that apply across systems.

In searching for general principles, it is natural to build a framework scaffolded on our equilibrium intuition gleaned from the fluctuation-dissipation theorem (FDT), which rationalizes response through fluctuations.
From this vantage point, analysis has focused on how the FDT is violated \cite{Harada2005} or in establishing nonequilibrium links between fluctuations and response.
This has been accomplished in direct analogy to the equilibrium FDT for specific equilibrium-like perturbations or initial preparations~\cite{Lubensky2010,Altaner2016}, or in general by connecting response to fluctuations in the sensitivity of the steady-state distribution \cite{Agarwal1972,Prost2009,Seifert2010}, in kinetic observables \cite{Speck2006,Baiesi2009,Chetrite2009,Seifert2010b,Chaudhuri2012,Aslyamov2025,Ptaszynski2026} or in force-fields \cite{Baldovin2022}.

An alternate route has emerged of late aimed at identifying nonequilibrium limits to response as a means to identify optimal system designs.
They have taken the form of  algebraic constraints~\cite{Aslyamov2024b,DalCengio2025,Floyd2025,Khodabandehlou2025,Bebon2026,Yang2026} and response bounds that offer fundamental trades-offs with model structure, thermodynamics~\cite{Owen2020,Owen2022,Owen2023,Martins2023,Chun2021,Gao2022,Gao2024,Chun2023,Ptaszynski2026} and kinetic properties \cite{Baiesi2011,Dechant2020,Ptaszynski2024,Aslyamov2024,Aslyamov2025,Kwon2025,Dechant2025,Zheng2025,Zheng2025b, Chun2026}.
From a broader perspective, though, the above predictions are delineating different facets of the space of all  responses, without capturing its entirety.
In this work, we take the first steps in characterizing this entire space for a class of systems driven arbitrarily far from equilibrium.
Specifically, we address the response of noisy systems that can be modeled via random jumps among a discrete collection of states, which encompasses numerous (bio)physical examples, including those above. 
For these systems, we identify a collection of 
maximally-sensitive models determined solely by the topology of the system's state space.
We further conjecture, based on analytic calculations and numerical exploration, that these optimal models are the only extreme points of the response space, implying that any model's response is a (convex) combination of them.
Together these observations suggest a potential geometric characterization of nonequilibrium response from which varied response behavior can be unified and systemized.

\section*{Shape of the response landscape}
\subsection*{Response in nonequilibrium steady states} As a modeling paradigm, we analyze a continuous-time Markov jump process on a graph $\mathcal{G}$ whose nodes $i=1,\ldots, N$ represent a collection of discrete physical configurations or states. 
Directed edges $e$ between pairs of states signify allowable transitions with (kinetic) rates $W_{e}$. 
In such a model, the probability $p_i(t)$ for a system to be in state $i$ at time $t$ evolves according to the master equation~\cite{VanKampen}
\begin{equation}\label{eq:master}
\partial_t p_i(t)=\sum_{j=1}^N W_{ij}p_j(t),
\end{equation}
with off-diagonal terms $W_{ij} = \sum_{e\in\mathcal{E}_{ij}} W_{e}$, where $\mathcal{E}_{ij}$ is the set of all $j\to i$ edges, and diagonal entries $W_{ii}=-\sum_{j\neq i}W_{ji}$ fixed by probability conservation.
We will assume every pair of states can be reached by a sequence of transitions --- that is the graph ${\mathcal G}$ is strongly connected --- so that there exists a unique normalized steady-state distribution $\pi_i$ given as the solution of $\sum_j W_{ij}\pi_j =0$.
With knowledge of $\pi_i$, we can predict the steady-state averages of physical observables $Q_j$ defined as $\langle Q\rangle = \sum_j Q_j \pi_j$, which are often more easily accessed experimentally.

Our central aim is to understand how the topology of ${\mathcal G}$ shapes the response of the steady state to coordinated changes in the kinetic rates. To this end, it is useful to parameterize the rates via an external control parameter $\lambda$ that we vary to perturb the dynamics, $W_e(\lambda)$. 
This parameter is rather arbitrary; potentially ranging from macroscopic, like an external electric field, to microscopic, as in a kinetic barrier. 
Once this parameterization is fixed, the scale of the perturbation is determined by the sensitivity of the rates to $\lambda$, 
\begin{equation}\label{eq:d}
d_e(\lambda) = \partial_\lambda \ln W_{e}(\lambda).
\end{equation}
Consider, as an example, a thermally-activated transition $W_{e}\propto e^{-\beta(B_{ij}-E_j)}$ in which $e$ is directed $\{j\to i\}$ and $B_{ij}-E_j$ is an activation energy at inverse-temperature $\beta$.
We can introduce externally-controlled changes to the energy landscape by coupling $\lambda$ to a conjugate coordinate $V_j$ via $E_j \to E_j+\lambda V_j$.
In this scenario, the sensitivity reduces to the conjugate coordinate, $d_{e} = V_j$, familiar from equilibrium response theory.

With this parameterization, the steady state becomes an implicit function of the external parameter $\pi_j(\lambda)$ via the kinetic rates, and variations in $\lambda$ translate into the steady-state response
\begin{equation}\label{eq:Qder}
\partial_\lambda \langle Q\rangle=\sum_{j=1}^N Q_j \partial_\lambda\pi_j =\sum_{j=1}^N Q_{j}\pi_{j}\partial_\lambda \ln\pi_{j},
\end{equation}
assuming, as we do for simplicity of presentation, that the physical observables are independent of $\lambda$.
Thus, by searching over all the transition rates $W_e$ with the same sensitivities $d_e$, we can explore the potential values of the steady-state sensitivities $\partial_\lambda \pi_j$, some of which will optimize the observable response $\partial_\lambda \langle Q\rangle$.
To characterize these optima we will find it convenient to consider a more constrained problem, which can be relaxed later, where we hold the steady state $\pi_j$ fixed as well.

Putting together all these observations motivates us to view \eqref{eq:Qder} as a linear objective function to be optimized over the vector of steady-state sensitivities $\partial_\lambda \ln \pi_j$.
Crucial to the solution of this problem is a characterization of the response space ${\mathcal R}$, defined here as the set of all sensitivities $\partial_\lambda \ln \pi_j$ achievable for some collection of transition rates, all with fixed $\pi_j$ and $d_e$.
By considering the sensitivities $\partial_\lambda \ln \pi_j$ as coordinates in an $N$-dimensional vector space, we have $\mathcal{R}\subseteq\mathbb{R}^{N}$.
However, normalization of the steady state  requires that $\langle\partial_{\lambda}\ln\pi_{j}\rangle=0$, meaning $\mathcal{R}$ is contained within an $N-1$ dimensional hyperplane of $\mathbb{R}^{N}$.
This is still an unbounded and unstructured characterization of ${\mathcal R}$.
Our main contribution will be the observation that ${\mathcal R}$ is actually a finite subset of a polytope specified by the topology of ${\mathcal G}$.

To link response theory to topology we exploit the Matrix Tree Theorem, which provides a powerful graph-theoretic representation of the steady-state distribution.
This theorem~\cite{schnakenberg1976,Hill} and its extensions~\cite{Caplan1982, Chaiken1982, Chebotarev2002} have helped us rationalize not only the steady-state distribution~\cite{Maes2012,Liang2022, Cetiner2022,Cetiner2023} but also its response~\cite{Polettini2017,Khodabandehlou2022,Floyd2025,Martins2023,Owen2022,Owen2023}. 
What follows builds on these observations and extends them, explaining how ${\mathcal G}$ also structures the entire space of potential responses ${\mathcal R}$.
To see this, we need to take a moment to introduce some concepts from graph theory that capture this topology.

\subsection*{Topology of the state graph}
For our analysis the main players are rooted spanning trees $T_r$~\cite{schnakenberg1976,Hill}: connected subgraphs of $\mathcal{G}$ that contain every node, no cycles, and each edge is oriented so that it points towards the root $r$, as exemplified in Fig.~\ref{fig:triangles}.
They are linked to the kinetics by assigning each tree a weight $\omega(T_r)=\prod_{e\in T_r} W_e$, as the product of the rates along the edges of the tree.
From the trees, we construct a new object by assembling a list of $N$ rooted spanning trees with each tree rooted at a different node, $\sigma =\{T^\sigma_1,\ldots, T^\sigma_N\}$, into a collection we term a {\it family of rooted spanning trees} or simply a tree family, with superscript labeling the tree family.
Similarly to the tree weights, tree families are assigned a weight $\omega(\sigma)=\prod_{k=1}^N \omega(T_{k}^\sigma)$, as the product of the weights of the trees.

\subsection*{Geometric characterization of response}
Tree families turn out to be the linchpin for identifying the geometric structure of response, once we couple them to the perturbations through a collection of observables  we call {\it topological scaling factors}
\begin{equation}
G_j^\sigma = \partial_{\lambda}\ln\left(\omega\left(T_{j}^{\sigma}\right)\right) = \sum_{e\in T_j^\sigma} d_e.
\end{equation}
This notation allows for our first result.
By directly differentiating the Matrix Tree Theorem expression for the steady-state distribution, the response can be organized as the convex combination
\begin{equation}\label{eq:convex_full}
\partial_\lambda\ln\pi_j = \sum_{\sigma}\frac{\omega\left(\sigma\right)}{\mathcal{Z}}\left(G_j^\sigma-\left\langle G^\sigma\right\rangle\right),
\end{equation}
where $\mathcal{Z}=\sum_{\sigma'}\omega(\sigma')$ is a normalization factor (derivation outlined in the Materials and Methods with details in the SI `convex representation of response').
It is worth noting that the convex weights $\omega(\sigma)/{\mathcal Z}$ are invariant under a site-dependent rescaling of the rates, $W_{e}\to W_{e}/\tau_{j}$ for all $e$ exiting node $j$.
This follows by noting that every tree family has exactly $N-1$ edges leaving a state (one edge for each tree in which that state is not the root), so that every tree family weight rescales identically, $\omega(\sigma)\to\omega(\sigma)/\prod_{j=1}^N \tau^{N-1}_j$, leaving the ratios $\omega(\sigma)/{\mathcal Z}$ unchanged.
Conversely, such a transformation does affect $\pi_{j}$, which increases with $\tau_{j}$.
Thus, as we vary the rates to explore all the potential sensitivities, we can freely readjust the scales $\tau_j$ to keep the steady-state distribution fixed. 
 
Equation (\ref{eq:convex_full}) immediately implies that ${\mathcal R}$ lies inside a polytope whose vertices are the tree families' (mean-shifted) topological scaling factors $v^\sigma_j=G_{j}^{\sigma}-\langle G^{\sigma}\rangle$, which represent potential optima.
Furthermore, each vertex satisfies the zero-mean normalization condition, $\langle v^\sigma\rangle=0$, meaning that this entire polytope  lies in the same hyperplane of $\mathbb{R}^{N}$ as $\mathcal{R}$.
So as long as the topological scaling factors are finite, the responses will be contained within a finite region with a shape determined by the topology of $\mathcal{G}$ (via the tree families) and the sensitivity of the kinetics (via the $d_e$).

While \eqref{eq:convex_full} offers a geometric characterization of ${\mathcal R}$, it does suffer from a deficiency.
The polytope is too large.
As we will show, not every tree family (i.e., $G_j^\sigma-\langle G^\sigma\rangle$) is contained within ${\mathcal R}$; that is for some tree families there is no combination of rates such that $\omega(\sigma)/{\mathcal Z}=1$ and  $\partial_\lambda \ln\pi_j =(G_j^\sigma-\langle G^\sigma\rangle)$.
Such tree families are thus not \emph{physically realizable}.
Critically though, those physically-realizable tree families that are in ${\mathcal R}$ share a specific topological property that we now elucidate.
\begin{figure}
\centering
\includegraphics[width=.95\linewidth]{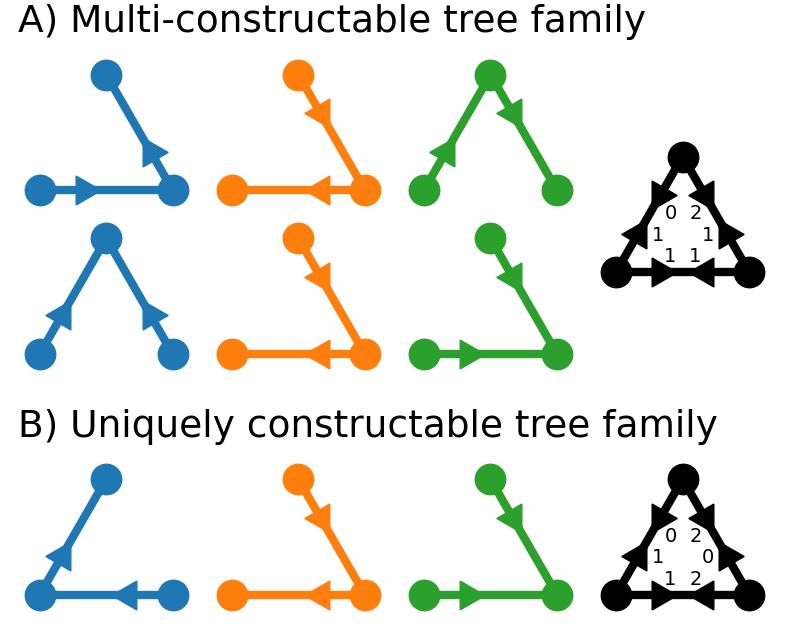}
\caption{Some example tree families of the 3 state graph. A) Two distinct tree families with trees rooted at the top (blue), left (orange), and right (green) states. The number of times each directed edge is used is given by the black graph to the right. Since both sets use the same collection of edges with the same multiplicities, they can each be reconstructed into the other and are thus mulit-constructable. B) Another possible tree family with its edge multiplicities. In this case, no other construction is possible, thus making this set uniquely constructable.}
\label{fig:triangles}
\end{figure}

Consider the two tree families depicted in Fig.\ \ref{fig:triangles}A.
We label these sets as {\it multi-constructable} due to the fact that despite containing different trees, they can be constructed from the same constituent edges with the same multiplicities, as shown in the black graph on the right.
This directly implies that they have the same weight $\omega(\sigma)$.
However, they are still distinct tree families and thus have distinct topological scaling factors.
The result is that both their contributions to the sum in \eqref{eq:convex_full} are bounded above by $\omega(\sigma)/\mathcal{Z}\le 1/2$.
Thus there is no set of rates realizing these tree families.

Contrast this with the tree family depicted in Fig.\ \ref{fig:triangles}B.
In this case, there is no other tree family that can be constructed from the same constituent edges.
It is an example of what we call a {\it uniquely-constructable} family, which we label using $\alpha$ instead of $\sigma$ as such families will play a key role in what follows.
Since they do not face the same bounding problem as the multi-constructable families, their $\omega(\alpha)/{\mathcal Z}$ can feasibly obtain a unit value, potentially making them physically realizable.
In fact, we have proven that all uniquely-constructable families are always physically realizable, while multi-constructable ones are never physically realizable (see SI `every physical-realizable tree family is uniquely constructable').
Put another way, none of the multi-constructable families can be optimal, while the uniquely-constructable ones are a subset of the potential extrema.

This distinction is crucial as it imposes the condition that all the polytope vertices $v^\sigma_j=G_j^\sigma-\langle G^\sigma\rangle$ corresponding to uniquely-constructable families must be contained within the closure of $\mathcal{R}$.
Given this, one may surmise that the responses can be written as a convex combination of \emph{only} the uniquely-constructible vertices,
\begin{equation}\label{eq:convex}
\partial_{\lambda}\ln\pi_{j} = \sum_{\alpha} \theta^{\alpha}\left(\lambda\right)\left(G^{\alpha}_{j}-\left\langle G^{\alpha}\right\rangle\right),
\end{equation}
where the positive coefficients $\theta^\alpha(\lambda)\ge 0$ are normalized $\sum_\alpha \theta^\alpha(\lambda)=1$, implicitly functions of $\lambda$ via the rates, and as linear combinations of the $\omega(\sigma)/{\mathcal Z}$ can be varied independently of the steady state.

However, \eqref{eq:convex} is not guaranteed to be true given what we have discussed.
$\mathcal{R}$ may well have curved surfaces that bulge outside of the convex hull of the uniquely-constructable families ($v^\alpha_j$) while still being contained within the convex hull of all tree families ($v^\sigma_j$).
Here, though, we conjecture that \eqref{eq:convex} is indeed valid.
Our conjecture is based on the aforementioned fact that the multi-constructable families are never physically realizable as well as direct confirmation for the 3 state graph (see SI `convex parameters for the 3-state graph').
Furthermore, we have also explored this conjecture numerically and verified it for many example networks, such as seen in Fig.~\ref{fig:convex_ver}.
\begin{figure}[tb]
\centering
\includegraphics[width=.95\linewidth]{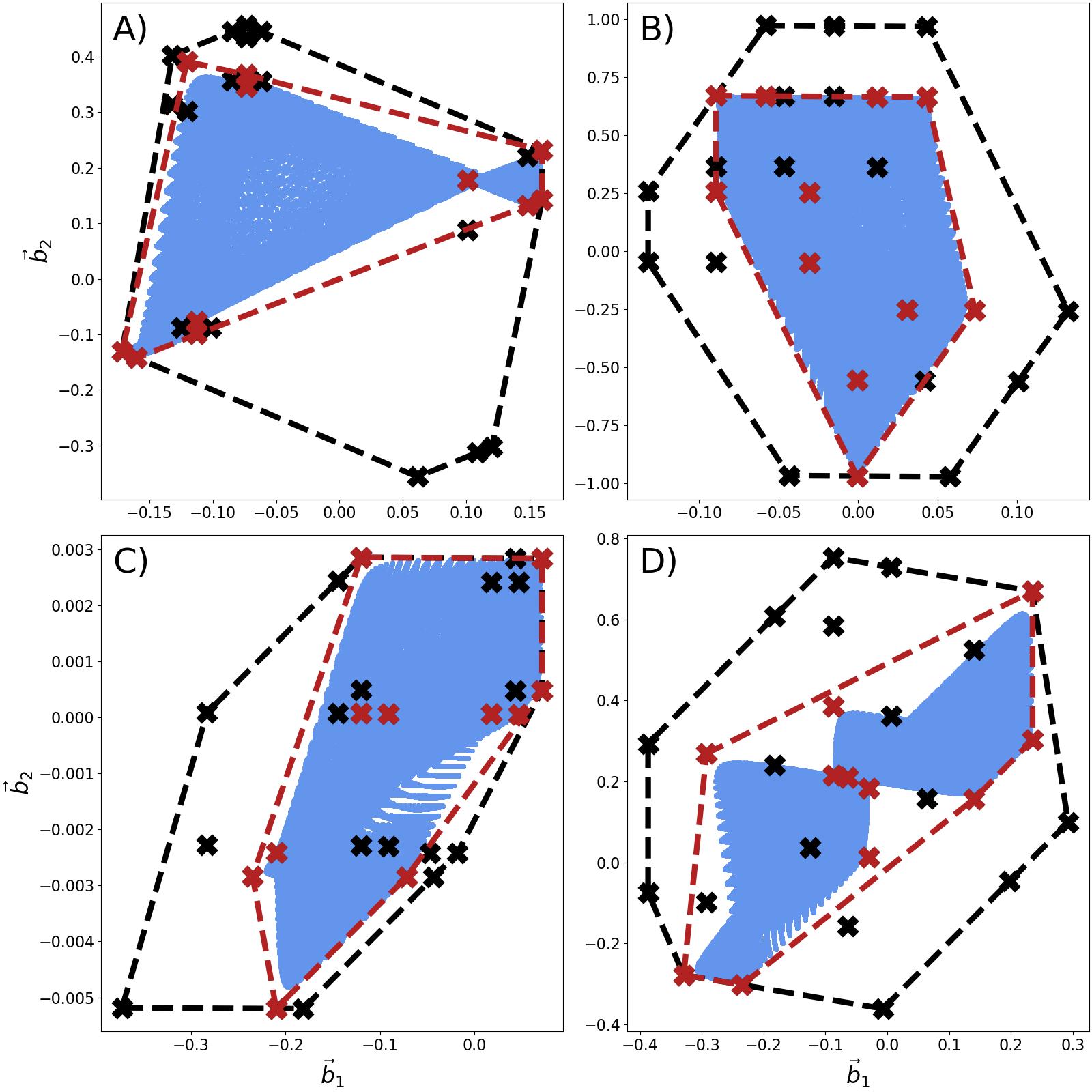}
\caption{Numeric exploration of response in the 3 state graph. The response of this system exists within a 2-dimensional subspace of the 3-dimensional space defined by $\partial_\lambda\ln\pi_j$. Axes represent an orthonormal basis of this subspace for each parameter set. The position of the tree family vertices ($v^{\sigma}$) are given by the $\times$ markers with multi-constructable tree families in black and uniquely-constructable ones in red. The convex hulls of these two collections are given by their correspondingly colored dashed lines. Blue points provide numerically calculated response values from a sweep over the space of possible rates under fixed $\pi$ and $d$ values, as described in the Materials and Methods. Critically, each such realization lies within the drawn convex hull of the uniquely-constructable points, though the hull is not always completely filled.}
\label{fig:convex_ver}
\end{figure}
Equation (\ref{eq:convex}) simply represents an extrapolation of these results to all possible graphs.

The implication for the response of arbitrary observables is readily obtained by substituting \eqref{eq:convex} into \eqref{eq:Qder},
\begin{equation}\label{eq:Qconvex}
\partial_\lambda\langle Q\rangle = \sum_\alpha \theta^\alpha \llangle QG^\alpha \rrangle,
\end{equation}
in terms of the steady-state covariances  $\llangle QG^\alpha\rrangle=\langle Q G^\alpha\rangle-\langle Q\rangle\langle G^\alpha\rangle$.
Equation (\ref{eq:Qconvex}) links the response to fluctuations in far-from-equilibrium scenarios, naturally drawing connections to the FDT and its nonequilibrium variations.
The notable distinction from previously explored results is that the ``conjugate coordinates'' $G^\alpha$ are merely determined  by the interplay between the external perturbation and the state-space topology.
Thus, the $G^\alpha$ can be predicted without solving for the steady-state distribution or inferring other dynamical quantities.
The cost we pay is that without a program for determining the convex parameters $\theta^\alpha$, we cannot predict the response directly.
Even still, knowledge of such a relationship provides new insights, as we now explore. 
\section*{Response in optimal models}
Equation (\ref{eq:Qconvex}) suggests that the the uniquely-constructable families act as a kind of archetypical primitives out of which all responses are built.
Understanding their structure can then shed light on the breadth of potential response behavior.

\subsection*{Sensitivity bounds}
With this geometry we can place limits on the range of responses for any arbitrary observable, $\partial_\lambda \langle Q\rangle$.
Indeed, \eqref{eq:Qconvex} is extremized when for some tree family $\theta^{\alpha}=1$ with the remaining $\theta^{\alpha}$ zero.
This implies that the maximum and minimum possible responses are
\begin{equation}\label{eq:ineq}
\min_\alpha \llangle QG^\alpha\rrangle \le  \partial_\lambda \langle Q\rangle \le \max_\alpha \llangle QG^\alpha\rrangle.
\end{equation}
This fluctuation bound represents the tightest possible limit to the sensitivity among all models on ${\mathcal G}$ with the same steady-state distribution and topological scaling factors.
From it weaker bounds can readily be derived in terms of observable fluctuations (e.g., $\langle Q\rangle$, $\llangle Q^2\rrangle$, $\langle G^\alpha\rangle$...) using covariance inequalities like those discussed in Ref.~\cite{Martins2023}.
As such, it significantly generalizes the support bound~\cite{Owen2023}, which is restricted to uniform perturbations (all $d_e$ equal or zero) and does not account for correlated responses across different states.
In Fig.\ \ref{fig:normsense}, we validate this numerically by plotting the normalized sensitivity
\begin{equation}\label{eq:normsens}
\hat{S}\left(\lambda\right) = \frac{2\partial_{\lambda}\langle Q\rangle-\max_\alpha \llangle QG^\alpha\rrangle-\min_\alpha \llangle QG^\alpha\rrangle}{\max_\alpha \llangle QG^\alpha\rrangle-\min_\alpha \llangle QG^\alpha\rrangle}\in\lbrack-1,1\rbrack.
\end{equation}
for several example systems, providing further  evidence for our conjecture (\eqref{eq:convex}).
\begin{figure}[tb]
\centering
\includegraphics[width=.95\linewidth]{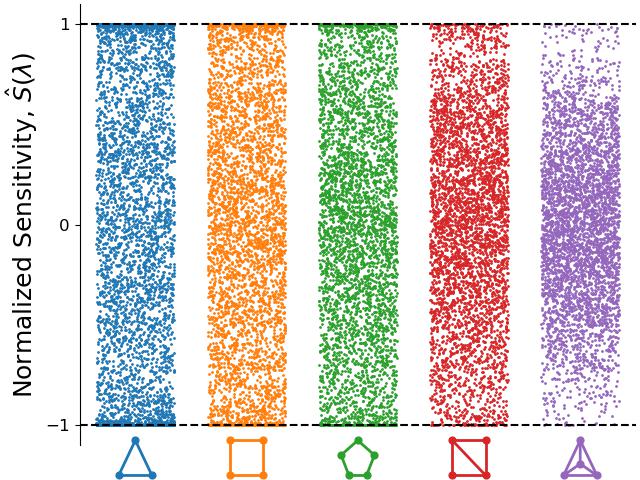}
\caption{Numeric exploration of sensitivity in several randomized systems. For each of the depicted networks, values of $Q$, $W$, and $d$ are all randomly determined from standard uniform, log-normal, and normal distributions, respectively. The corresponding $\partial_\lambda \langle Q\rangle$, $\min_\alpha \llangle QG^\alpha\rrangle$, and $\max_\alpha \llangle QG^\alpha\rrangle$ are then calculated, and the normalized sensitivity is used so that $\hat{S}(\lambda)=+1$ corresponds to $\partial_\lambda \langle Q\rangle = \max_\alpha \llangle QG^\alpha\rrangle$ and $\hat{S}(\lambda)=-1$ to $\partial_\lambda \langle Q\rangle = \min_\alpha \llangle QG^\alpha\rrangle$. This process of drawing new values of $Q$, $W$, and $d$ and recalculating $\hat{S}(\lambda)$ is repeated a total of 4096 times for each network. This data shows the validity of the bounds presented in \eqref{eq:ineq}.}
\label{fig:normsense}
\end{figure}

Equation~(\ref{eq:ineq}) teaches us that to design an optimally-responsive model, we need to tune the rates until we realize one of the uniquely-constructable tree families.
In the SI `every physical-realizable tree family is uniquely constructable', we show this is always possible and occurs when the rates conspire so that the weights of all the $T^\alpha_j\in \alpha$ are by far the largest, i.e, for every $j$ we have $\omega(T_j^\alpha)\gg\omega(T_j)$ for all $T_j \neq T^\alpha_j$.

\subsection*{Parameterizing the steady-state}
Further insight can be gained by focusing on a special class of models where the rate-sensitivities $d_e$ are fixed and constant.
This is a common motif in biochemical models where the perturbations of interest are changes in the log-concentration of an input molecule, $\lambda=\ln c$~\cite{tu2008nonequilibrium,Mattingly2025,Bintu2005,Estrada2016,Tran2018,Park2019,Owen2023,Mahdavi2023}.
Assuming mass-action kinetics, the rates are either constant with respect to $c$ ($d_{e}=0$) or proportional to the concentration with $W_e=k_e c$ and $k_e$ independent of $c$ ($d_e=\partial_{\ln c}\ln W_e=1$).
For such models, we can easily work out the optimal steady-state distribution for all parameter values and explore its structure.

To this end, imagine the rates have be tuned so that one of the uniquely-constructable families, $\alpha$, has been physically realized over some range of $\lambda$ values, $\lambda\in\mathcal{L}$, which we denote as the \textit{realization range}.
In such an optimal scenario, we can set $\theta^{\alpha}(\lambda)=1$ over $\mathcal{L}$ (see Materials and Methods `uniquely-constructable steady state'), and \eqref{eq:convex} becomes a differential equation for the optimal steady-state distribution with solution
\begin{equation}\label{eq:uniqConstDistr}
\pi_j(\lambda) = \frac{1}{\mathcal N}e^{\lambda G_j^{\alpha}}\pi_j(0),
\end{equation}
referenced against a fixed, though arbitrary, initial steady-state distribution $\pi_j(0)$ at $\lambda=0$, with normalization constant ${\mathcal N} = \sum_j e^{\lambda G^\alpha_j}\pi_j(0)$.
Remarkably, the locus of steady states traced out by varying $\lambda$ within $\mathcal{L}$ can be obtained through a simple exponential reweighting via the topological scaling factor $G_j^\alpha$.
While this is not a general solution for the steady state, it does provide a nontrivial relationship between different steady-state distributions.
In particular, we can use it to predict the shape of the response curve for $\lambda\in\mathcal{L}$
\begin{equation}\label{eq:optObs}
\langle Q\rangle_\lambda =  \frac{1}{\mathcal N}\sum_j Q_j e^{\lambda G_j^{\alpha}}\pi_j(0).
\end{equation}

This response curve can display a variety of complicated structures.
To contextualize them, we turn to biochemistry where Hill functions serve as the basic phenomenological organizing equations \cite{hill1910possible,sorribas2007cooperativity,stefan2013cooperative}:
\begin{equation}\label{eq:hilldef}
h^{+}\left(c\right) = \frac{c^{n}}{K^{n}+c^{n}}; \quad h^{-}\left(c\right) = \frac{K^{n}}{K^{n}+c^{n}}.
\end{equation}
Here, the Hill coefficient $n$ measures the senstivity,
\begin{equation}\label{eq:hillextract}
n=\frac{\partial_{\ln c}h^{\pm}\left(c\right)}{h^{\pm}\left(c\right)\left(1-h^{\pm}\left(c\right)\right)},
\end{equation}
$K$ is the half-maximum point, and the $\pm$ superscripts denote whether the Hill function is increasing or decreasing with the input variable, $c$. Note that \eqref{eq:hillextract} produces a negative Hill coefficient for decreasing Hill functions.

By inspection of \eqref{eq:optObs}, we can arrange a linear combination of Hill functions by tuning the steady state distribution $\pi_j(0)$ to be nonzero on only two sets of states: $j\in{\mathcal S}_>$ with  $G_{j}^{\alpha}=G_>$ and $l\in{\mathcal S}_<$ with $G_{l}^{\alpha}=G_<$. 
Under these conditions, the response curve only depends on $\Delta G=G_>-G_< > 0$ as
\begin{equation}\label{eq:hillform}
\langle Q\rangle_\lambda = \frac{\langle Q\rangle_<+\langle Q\rangle_>e^{\lambda\Delta G}}{\langle 1\rangle_<+\langle 1\rangle_>e^{\lambda\Delta G}},
\end{equation}
where $\langle Q\rangle_>=\sum_{j\in {\mathcal S}_>}Q_j\pi_j(0)$ and similar for $\langle Q\rangle_<$.
Equation (\ref{eq:hillform}) is clearly a linear sum of an increasing and decreasing Hill function, both of which have
\begin{equation}\label{eq:DelG}
\quad n = \Delta G.
\end{equation}

This simple phenomenon is exemplified in Fig.~\ref{fig:hill}, where we consider the three state network once again.
\begin{figure}[!tb]
\centering
\includegraphics[width=.85\linewidth]{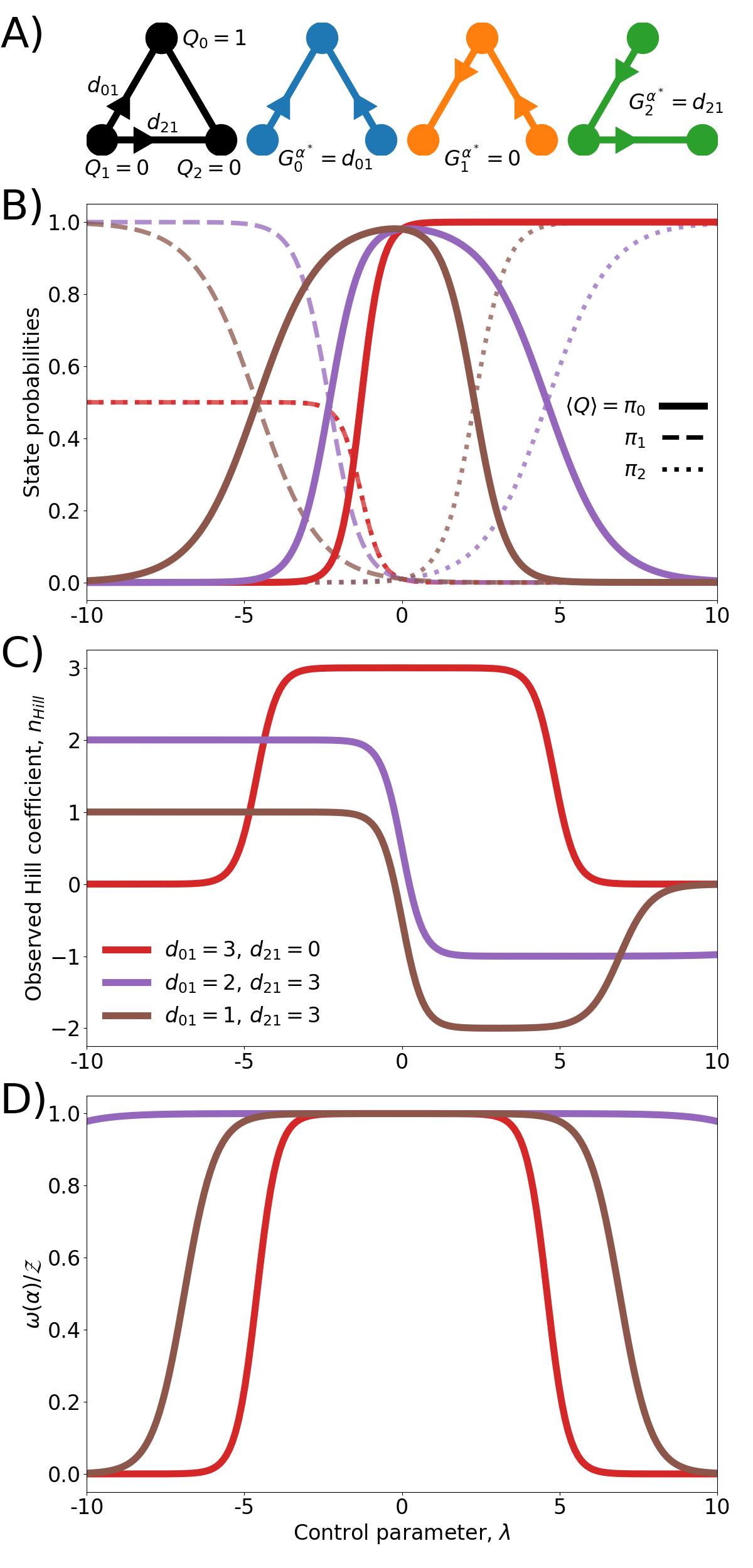}
\caption{A) The three state network with displayed values of $Q$ and $d$ is given base rates $W_{01}=1$, $W_{10}=10^{-2}$, $W_{02}=W_{21}=10^{-6}$, $W_{12}=10^{-12}$, and $W_{20}=10^{-14}$ so that the depicted uniquely constructable tree family is physically realized. B) The steady state probabilities of each state (and equivalently $\langle Q\rangle$) showcase the characteristic features of single (red) and double (purple and brown) sided Hill functions. C) The Hill coefficient is calculated via \eqref{eq:hillextract} for each system as a function of $\lambda$. The values in the transition regions are seen to perfectly match those predicted by \eqref{eq:DelG} with the signs dictating ramp up and ramp down regimes respectively. D) The realization range ($\mathcal{L}$) for each curve is seen to extend over the transition regions of interest ($\mathcal{L}\approx\lbrack -4,4\rbrack$ for the red curve, $\approx\lbrack -10,10\rbrack$ for the purple, and $\approx\lbrack -6,6\rbrack$ for the brown).}
\label{fig:hill}
\end{figure}
This time the rates are chosen so as to physically realize the uniquely-constructable family depicted in Fig.\ \ref{fig:hill}A with the realization range (Fig.\ \ref{fig:hill}D) extending over the regions in which probability is seen to transfer between states in a Hill-like fashion.
The edges leaving the left state are taken to vary with $\lambda$ as $W_{01}\propto\text{exp}(\lambda d_{01})$ and $W_{21}\propto\text{exp}(\lambda d_{21})$ while the remaining rates are held fixed. 
For the particular system depicted in Fig.\ \ref{fig:hill}, only the top state has $Q=1$ while the left and right states have $Q=0$, which in turn allows \eqref{eq:hillform} to simplify down to a single Hill function. 

For different choices of rate-sensitivities $d_e$ we can exhibit different behaviors.
Figure \ref{fig:hill}B shows the resulting probabilities accompanied in Fig.\ \ref{fig:hill}C by the observed Hill coefficient  (cf. \eqref{eq:hillextract})
\begin{equation}
n_{\rm Hill}=\frac{\partial_{\lambda}\langle Q\rangle_\lambda}{\langle Q\rangle_\lambda\left(1-\langle Q\rangle_\lambda\right)}.
\end{equation}
In the case of the red curves, $\langle Q\rangle$ is a Hill function that shifts the probability from states 1 and 2 ($G_{<}=0$) to state 0 ($G_{>}=3$) with observed Hill coefficient $\Delta G=3$ in the transition region. 
The purple and brown curves complicate the matter slightly by containing two distinct transitions each, one rising ($\lambda<0$, $\pi_{2}\approx 0$) and one falling ($\lambda>0$, $\pi_{1}\approx 0$). 
In both of these cases, each state has a distinct $G_{j}^{\alpha}$ value, but the general form of \eqref{eq:hillform} is maintained on either side of $\lambda=0$ by having one of the states suffer a near zero probability and thus leave only the remaining two to significantly contribute to \eqref{eq:optObs}.
Specifically, the ramp up side moves probability from state 1 ($G_{<}=0$) to state 0 ($G_{>}=d_{01}$) while the ramp down side moves probability from state 0 ($G_{<}=d_{01}$) to state 2 ($G_{>}=d_{21}=3$). In this way, we see that such a system exhibits a tradeoff between the sensitivities on either side; the faster it ramps up (larger $\Delta G=d_{01}-0$), the slower it ramps down (smaller $\Delta G=d_{21}-d_{01}$). This highlights the more general result that in any system in which a particular tree family has been physically realized, any observed Hill coefficient is always bound by the maximum value of $\Delta G$ across the whole tree family. 
Similarly, any perturbation that shifts the probability between more than two states must satisfy that the sum over the observed Hill coefficients is similarly bound by the maximum $\Delta G$.

\subsection*{General three site model}
The previous example showcased how knowledge of a particular uniquely-constructable family lends insight into the potential structures of a system's response.
However, in systems of even moderate size the number of uniquely-constructable families can be quite large.
Here, we show how to conceptually explore these possibilities by utilizing other known properties of the state graph and the perturbation itself.

Our motivation here is a variety of biophysical processes relying on the unordered binding of indistinguishable particles to three sites.
These could be three transcription factor binding sites on a gene enhancer, or a triplet of ligand-binding sensory receptors on a cell's membrane.
As each site can be bound ($1$) or empty ($0$), each configuration is represented by a triplet of binary numbers (e.g., $(000)$, $(100)$, $(101)$ etc.), with the full state space depicted in Fig.~\ref{fig:3binding}.
\begin{figure}[t]
\centering
\includegraphics[width=.85\linewidth]{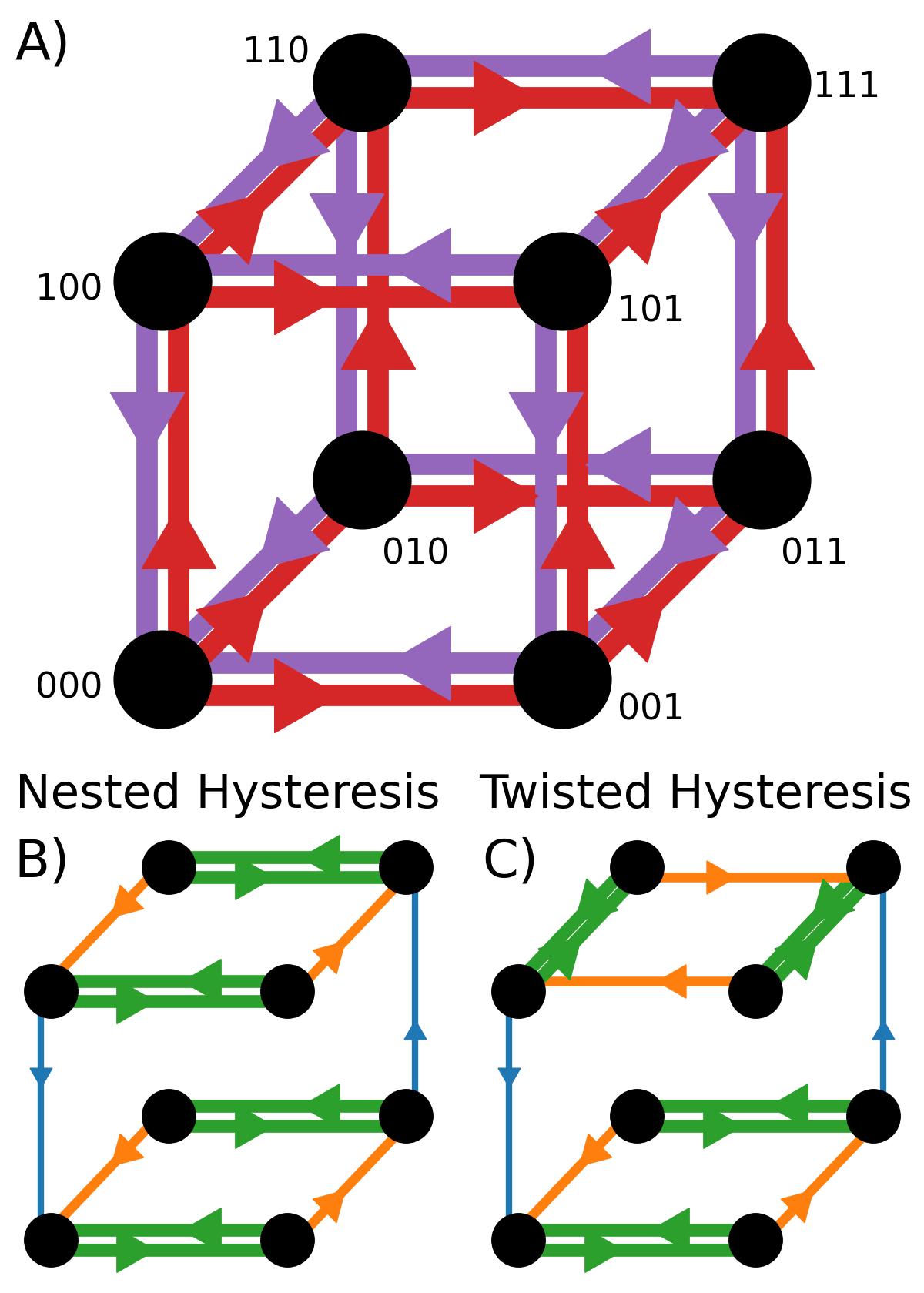}
\caption{The binary triplet model. A) The eight possible binary triplet states with all allowable transitions. Red edges represent binding edges ($d_{e}=1$) while purple edges represent unbinding edges ($d_{e}=0$). B) The Nested Hysteresis model. The green left-right edges hold the highest rates, the orange front-back edges are of moderate magnitude, and the blue top-bottom edges have the slowest transitions. This model is capable of achieving maximum sensitivity. C) The Twisted Hysteresis model. Rates are color coded in the same way as (B), but the top portion of the graph has been twisted. This model is also capable of achieving maximum sensitivity. See SI for full breakdown of the corresponding tree families.}
\label{fig:3binding}
\end{figure}
Here, the binding rates are commonly taken to be sensitive to the ligand concentration, $W_e\propto c\propto e^\lambda$, our control parameter, while the unbinding rates are not.
Then the most compelling variable of interest is often the fraction of bound sites, $Q_j\in\{0,1/3,2/3,1\}$.

By numerically exploring the space of all possible spanning trees, we can show that there exists 384 possible trees for this system.
Combining this with the $2^3=8$ configuration states yields $384^8\approx 10^{20}$ distinct tree families.
To find the uniquely-constructable ones that optimize the sensitivity, \eqref{eq:ineq}, we need to maximize the covariance $\llangle QG^\alpha\rrangle$.
We have found that utilizing the Cauchy-Schwarz inequality can aid in the search:
\begin{equation}
\llangle QG^\alpha\rrangle \le \sqrt{\llangle Q^{2}\rrangle\llangle \left(G^\alpha\right)^{2}\rrangle}.
\label{eq:covineq}
\end{equation}
Saturating this bound requires that $Q$ and $G^\alpha$ must vary in concert.
Maximizing it requires maximizing the variances of both, which we do by first searching over all $\pi_j$ and then identifying the optimal $G^\alpha_j$.

The simplest task is to maximize $\llangle Q^{2}\rrangle$ over all $\pi$. 
Only the fully unbound (000) state achieves the minimum $Q=0$ while only the fully bound (111) state achieves the maximum $Q=1$. 
The maximum variance $\llangle Q^{2}\rrangle=1/4$ is thus realized when these two extreme states are given equal and maximal weight, $\pi_{(000)}=\pi_{(111)}=1/2$.

Under this distribution, the variance of $G^\alpha$ can also be maximized by ensuring it has a minimum possible value in the fully unbound state and maximum possible value in the fully bound state. 
To generate this arrangement, we note with our choice of $d_e$, each $G_j^\alpha$ is merely the number of binding edges that exist within the $j$-rooted tree $T_j^\alpha$. 
Beginning with the fully unbound state, the minimum possible value of $G_{(000)}^\alpha$ occurs when $T_{(000)}^\alpha$ has no binding edges. 
Similarly, the maximum possible value of $G_{(111)}^\alpha$ occurs when $T_{(111)}^\alpha$ contains only binding edges. 
If both of these conditions can be satisfied simultaneously, then of $G_{(000)}^\alpha=0$ from the lack of any binding edges and $G_{(111)}^\alpha=7$ from the 7 edges that exist in any spanning tree.
The resulting variance is $\llangle \left(G^\alpha\right)^{2}\rrangle=49/4$.

Of the 384 possible spanning trees, only 24 of them contain no binding edges when rooted at the fully unbound state while another 24 contain only binding edges when rooted at the fully bound state (see SI). This space of $24\times 24$ possibilities can be further reduced by enforcing unique constructability. 
Indeed, upon scrutinizing all $24^{2}$ possible pairs of these trees, one finds that only 12 cannot be reconstructed into a different pair of trees.
Thus, the only uniquely-constructable familes with our extreme values ($G_{(000)}^\alpha=0$ and $G_{(111)}^\alpha=7$), must contain one of these 12 pairs.
The task then becomes to fill out the remains spanning trees rooted at the remaining states in such a way as to maintain unique constructability. 
As it turns out, each of these 12 uniquely constructable pairs can be expanded into exactly one full uniquely-constructable  tree family, thus producing 12 possible solutions that each obtain the maximum possible sensitivity of $\partial_{\lambda}\langle Q\rangle=\sqrt{\llangle Q^{2}\rrangle\llangle \left(G^\alpha\right)^{2}\rrangle}=7/4$.

These 12 solutions can be split into two distinct groups. The first set of 6 are simply the nested hysteresis scheme identified in Ref.\ \cite{Owen2023} and all of its rotations and reflections. The second set of 6 is equivalent to nested hysteresis, but with the binding order of the first two sites switched when the third becomes bound, plus its the rotations and reflections.
From this, we conclude that nested hysteresis and its twisted counterpart not only represent some of the possible optima, as discussed in Ref.\ \cite{Owen2023}, but are in actuality the only possible solutions.

Our results also allow us to address the minimum sensitivity given by the lower bound in \eqref{eq:ineq}.
We have already chosen $\pi$ such that the extreme states host all of the probability, so now we must invert our previous procedure and find the uniquely constructable tree family that has the highest possible number of binding edges in $T_{(000)}$ and the lowest possible number of binding edges in $T_{(111)}$. 
While it is possible to construct a $T_{(000)}$ that has as many as 4 binding edges and a $T_{(111)}$ that has as few as 3, no such pair of trees are uniquely constructable. 
In fact, by scanning over all possibilities, one can show that no pair of trees with more than 2 binding edges in $T_{(000)}$ or fewer than 5 in $T_{(111)}$ are uniquely constructable. 
Just as before, these pairs can again be expanded into full uniquely-constructable tree families. 
Thus, the minimum possible variance in $G^\alpha$ is achieved when $G_{(000)}^\alpha=2$ and $G_{(111)}^\alpha=5$, which yields a minimum sensitivity of $\partial_{\lambda}\langle Q\rangle=3/4$, at least at the considered distribution $\pi_{(000)}=\pi_{(111)}=1/2$.

\subsection*{Discussion}

We have identified a potential simple geometric architecture for the space of responses, providing analytic and numerical evidence.
This space is organized by a collection of optimal models that serve as the vertices of the smallest enclosing polytope.
Of note, these optimal models and their responses are determined by the topology of the state-space graph ${\mathcal G}$ through the uniquely-constructable tree families. 

This geometric perspective further suggests a procedure for numerically optimizing response.
Namely, assemble all the uniquely-constructable tree families.
Once they are known the optimal response is guaranteed to be found among them for any observable. 
While it may be computationally challenging to construct all the optimal tree families, this approach should be contrasted with the more common approach based on numerically sampling the rates $W_{e}$.
The difficulty here can be seen from this and previous work: the optimal rates require an extreme scale separation that may be hard to find numerically.
Moreover, for every new observable the procedure needs to be repeated.

Alternatively, we have shown that one need not enumerate every possible tree family when optimizing for a particular perturbation.
In our investigation of the three site model, we were able to restrict our search to only the $(000)$ and $(111)$ rooted trees with the desired properties, which reduced the number of relevant tree families from $384^{8}$ all the way down to $24^{2}$, a reduction by almost 18 orders of magnitude.
This style of process has the advantage of incurring significantly less computational cost, but lacks the versatility that being able to list all possible tree families provides.

In the case that one is able to isolate a single, physically realized tree family via sufficiently extreme rate-scale separation (such as those given in Fig. \ref{fig:hill}), the topological scaling factor for that tree family will dictate most aspects of the response curve.
The tradeoff between the response rates at different sections of the response curve we have depicted here could allow for calculable performance restrictions for biological bandpass filter mechanisms \cite{sohka2009externally,shui2024protein}.
Though, such insights would be aided by determining how to use this methodology to tune the leakage and saturation levels~\cite{ang2013tuning}.
This is exemplified in Fig.\ \ref{fig:hill} wherein $\langle Q\rangle$ is seen to vary over the entire allowable range of $\lbrack 0,1\rbrack$ despite only a single tree family being physically realized.

The most obvious next step for progressing this work would be to formally prove our conjecture that \eqref{eq:convex} and in turn verify its crucial consequences such as \eqref{eq:ineq}.
Beyond this, it would also be desirable to understand how the geometric structure of the response space, $\mathcal{R}$, changes when additional constraints or symmetries are included in the dynamics.
For example, energy is a common resource constraint in not just biophysical models, but nonequilibrium systems in general.
A proper understanding of how restrictions on energy consumption and thermodynamic forcing condenses $\mathcal{R}$ could provide tighter bounds on the system response while remaining agnostic to the individual rate kinetics.
Furthermore, when transitions emerge from related microscopic processes, such as in chemical reaction networks, the rates cannot vary independently as considered here, thereby similarly reshaping the response space.
These each represent multiple approaches to the larger problem of discerning what regions of the response space become inaccessible when restrictions are placed on the allowed dynamics of the system.

\section*{Materials and Methods}
\subsection*{Matrix Tree Theorem}
The Matrix Tree Theorem (MTT) provides a graphical representation of the steady-state solution of \eqref{eq:master} in terms of the weights of spanning trees~\cite{schnakenberg1976,Hill}:
\begin{equation}\label{eq:MTT}
\pi_j = \frac{\sum_T\omega(T_j)}{\sum_{T,l} \omega(T_l)}=\frac{\omega({\mathcal T}_j)}{\omega(\mathcal T)},
\end{equation}
where we have introduced the tree-set weights $\omega(\mathcal{T}_j)=\sum_T\omega(T_j)$, and $\omega({\mathcal T})=\sum_{l,T}\omega(T_l)$.

\subsection*{Sketch of Eq. (\ref{eq:convex_full}) derivation}
Equation (\ref{eq:MTT}) is particularly convenient for calculating response graphically.
The full details can be found in the SI.  Here we sketch the key steps.

From the MTT (\ref{eq:MTT}), the steady-state logarithmic sensitivities can be expressed as 
\begin{equation}\label{eq:MTTresponse}
\partial_\lambda \ln \pi_j=\partial_\lambda \ln \omega({\mathcal T}_j)-\partial_\lambda \ln \omega({\mathcal T}).
\end{equation} 
Thus, we need to calculate the derivative
\begin{equation}
\partial_\lambda \ln \omega({\mathcal T}_j)=\frac{1}{\omega({\mathcal T}_j)}\sum_{e,T} \frac{\partial \ln W_e}{\partial\lambda}\frac{\partial \omega(T_j)}{\partial \ln W_e}.
\end{equation}
Using the definition of $d_e$ (\ref{eq:d}) and noting that only trees $T^e_j$ that contain the edge $e$ have nonzero derivative we have 
\begin{align}\label{eq:dlwj}
\partial_\lambda \ln \omega({\mathcal T}_j)&=\frac{1}{\omega({\mathcal T}_j)} \sum_{e} d_e \sum_{T\ni e} \omega(T^e_j)
\end{align}
The last step is to exchange the  $e$ and $T$ sums and multiply by one,
\begin{align}
\partial_\lambda \ln \omega({\mathcal T}_j)=\frac{\prod_{l\neq j} \omega({\mathcal T}_l)}{\prod_{k} \omega({\mathcal T}_k)} \sum_{T} \omega(T_j) \sum_{e\in T} d_e=\sum_\sigma \frac{\omega(\sigma)}{{\mathcal Z}} G^\sigma_j,
\end{align}
recognizing ${\mathcal Z} = \prod_k \omega({\mathcal T}_k)=\sum_\sigma\omega(\sigma)$.
A similar calculation for $\partial_\lambda \ln\omega({\mathcal T})$ leads to the final result in \eqref{eq:convex_full}.

\subsection*{Uniquely-constructable steady state} By tuning the rates, we can realize a uniquely-constructable family $\alpha$.
In the SI `every physical-realizable tree family is uniquely constructable', we show this requires that each tree in $\alpha$ has the highest weight, $\omega(T_j^\alpha)\gg\omega(T_j)$ for all $T_j \neq T^\alpha_j$.
This condition simplifies the MTT expression for the steady-state distribution, since the sums in \eqref{eq:MTT} are dominated by the largest terms
\begin{equation}
\pi_j \approx \frac{\omega(T^\alpha_j)}{\sum_{l} \omega(T^\alpha_l)}.
\end{equation}
Such a steady-state changes in a rather simple manner as we vary the external parameters.  
Indeed, for constant $d_e$ as we assume in this section, varying the external parameter exponentially reweights the rates $W_e \to W_e e^{\lambda d_e}$ with a corresponding reweighting of the tree weights and the steady-state distribution as
\begin{equation}
\pi_j \to \frac{\omega(T^\alpha_j)e^{\lambda G_j^\alpha}}{\sum_{l} \omega(T^\alpha_l)e^{\lambda G_l^\alpha}},
\end{equation}
which corresponds to \eqref{eq:uniqConstDistr}.
This form is approximately valid for all $\lambda$ over which the relation $\omega(T_j^\alpha)e^{\lambda G^\alpha_j}\gg\omega(T_j)e^{\lambda G_j}$.
An immediate consequence is that the realization range, $\mathcal{L}$, may be defined as the range of $\lambda$ values wherein $\omega(\alpha)/\mathcal{Z}\approx 1$.

\subsection*{Generation of Fig. \ref{fig:convex_ver}}

The axes vectors, $\vec{b}_{1}$ and $\vec{b}_{2}$ are orthonormal basis vectors of the space satisfying $\langle \partial_{\lambda}\ln\pi\rangle=0$. Specifically, we have chosen these to take the forms

\begin{subequations}
\begin{equation}
\vec{b}_{1} = \frac{\left\lbrack\pi_{1},-\pi_{0},0\right\rbrack^{\text{T}}}{\sqrt{\pi_{0}^{2}+\pi_{1}^{2}}},
\label{bdef1}
\end{equation}
\begin{equation}
\vec{b}_{2} = \frac{\left\lbrack\pi_{0}\pi_{2},\pi_{1}\pi_{2},-\pi_{0}^{2}-\pi_{1}^{2}\right\rbrack^{\text{T}}}{\sqrt{\left(\pi_{0}^{2}+\pi_{1}^{2}\right)\left(\pi_{0}^{2}+\pi_{1}^{2}+\pi_{2}^{2}\right)}}.
\label{bdef2}
\end{equation}
\label{bdefs}
\end{subequations}

Parameter values used to generate the four plots are:

\noindent Fig.\ \ref{fig:convex_ver}A:
\begin{center}
\begin{tabular}{ c c c }
$\pi_{0} = 0.165$ & $\pi_{1} = 0.170$ & $\pi_{2} = 0.665$ \\
$d_{01} = 0.158$ & $d_{12} = 0.136$ & $d_{20} = -1.194$ \\
$d_{10} = 0.777$ & $d_{02} = 0.037$ & $d_{21} = -0.240$
\end{tabular}
\end{center}
Fig.\ \ref{fig:convex_ver}B:
\begin{center}
\begin{tabular}{ c c c }
$\pi_{0} = 0.506$ & $\pi_{1} = 0.035$ & $\pi_{2} = 0.459$ \\
$d_{01} = -0.741$ & $d_{12} = 0.874$ & $d_{20} = 0.134$ \\
$d_{10} = 1.806$ & $d_{02} = 2.091$ & $d_{21} = 0.153$
\end{tabular}
\end{center}
Fig.\ \ref{fig:convex_ver}C:
\begin{center}
\begin{tabular}{ c c c }
$\pi_{0} = 0.933$ & $\pi_{1} = 0.066$ & $\pi_{2} = 0.001$ \\
$d_{01} = -1.054$ & $d_{12} = 0.398$ & $d_{20} = -2.148$ \\
$d_{10} = 0.779$ & $d_{02} = -2.093$ & $d_{21} = 1.057$
\end{tabular}
\end{center}
Fig.\ \ref{fig:convex_ver}D:
\begin{center}
\begin{tabular}{ c c c }
$\pi_{0} = 0.437$ & $\pi_{1} = 0.179$ & $\pi_{2} = 0.384$ \\
$d_{01} = 0.030$ & $d_{12} = -0.168$ & $d_{20} = -1.387$ \\
$d_{10} = 0.558$ & $d_{02} = -0.738$ & $d_{21} = -1.205$
\end{tabular}
\end{center}

To perform the parameter sweep that generated the blue points, we begin from \eqref{eq:dlwj} and note that for the 3 state graph in particular, the ratios $\sum_{T\ni e}\omega(T_{j}^{e})/\omega(\mathcal{T}_{j})$ can always be expressed entirely in terms of $W_{20}/W_{10}$, $W_{01}/W_{21}$, and $W_{12}/W_{02}$ without any further dependence on the bare rate values themselves.
Conversely, from the MTT we know that the steady-state probabilities cannot be written in terms of only these three ratios but rather requires the bare rates in addition.
This allows us to sweep over the response space by defining the 3-dimensional space in which $\ln(W_{20}/W_{10})$, $\ln(W_{01}/W_{21})$, and $\ln(W_{12}/W_{02})$ are taken as the position coordinates while the steady-state distribution is held fixed.
We uniformly scan an octahedral region with major axes corresponding to the range $\lbrack -12,12\rbrack$ along the three coordinate axes and a grid spacing of 0.25 and calculate the response factors via \eqref{eq:MTTresponse} and (\ref{eq:dlwj}).


\section*{Acknowledgments}
This material is based upon work supported in part by the National Science Foundation under Grant No. DMR-2142466 and by the Alfred P. Sloan Foundation under Grant No. 2022-19440.

\bibliography{arxiv_v1.bib}

\onecolumngrid
\renewcommand\theequation{S.\arabic{equation}}
\setcounter{equation}{0} 
\renewcommand\thefigure{S.\arabic{figure}}
\setcounter{figure}{0}  

\section*{Supplementary Information}

\subsection*{Setup and notation}

We begin by establishing some notation and collecting relevant background results for easy reference. 

Recall that we are considering a discrete-state Markov process that admits a weighted graph representation ${\mathcal G}$.
Here, the nodes $i=1,\dots,N$ represent states and the directed edges $e$ represent possible transitions with weights $W_{e}$ given by the transition rates.
The rates are assumed to depend on an externally-controlled parameter $\lambda$ that we perturb to interrogate the dynamics.

Here we review key topological properties of ${\mathcal G}$ important for our discussion. 
These notions are illustrated with a concrete example in Fig. \ref{treedefs}.\newline

\noindent{\it Rooted trees.---}Central to our analysis are spanning trees $T_r$, which are connected subgraphs of ${\mathcal G}$ that contain every vertex, no cycles, and every edge is directed toward the root $r$.
Their weight is inherited  from ${\mathcal G}$, being given as the product of weights of every edge,
\begin{equation}
\omega\left(T_r\right) = \prod_{e\in T_r}W_{e}.
\label{wdef}
\end{equation}
This construction can naturally be extended to sets of rooted trees by summing over the collection.
Namely, the weight of the set ${\mathcal T}_r$ of all trees rooted at $r$ is $\omega({\mathcal T}_r)=\sum_{T_r\in {\mathcal T}_r} \omega(T_r)$; and similarly for the weight of the set of all rooted trees ${\mathcal T}=\bigcup_r {\mathcal T}_r$  is $\omega({\mathcal T})=\sum_r \omega({\mathcal T}_r)$.\newline

\noindent{\it Matrix Tree Theorem.---}With this notation in hand, we can now state our main theoretical tool, the Matrix Tree Theorem~\cite{schnakenberg1976,Hill}, which is a graph-theoretic expression for the steady-state distribution
\begin{equation}
\pi_{i} = \frac{\sum_{T\in\mathcal{T}_{i}}\omega\left(T_i\right)}{\sum_{r}\sum_{T\in\mathcal{T}_{r}}\omega\left(T_r \right)} = \frac{\omega\left(\mathcal{T}_{i}\right)}{\omega\left(\mathcal{T}\right)}.
\label{MTT}
\end{equation}\newline

\noindent{\it Tree families.---}While rooted trees offer a convenient representation of the steady state, we have suggested in this paper that tree families are the natural graph-theoretic construction to capture steady-state response.  
A tree family $\sigma=(T_1^\sigma,T_2^\sigma,\dots T^\sigma_N)\in {\mathcal T}_1\times{\mathcal T}_2\times\cdots{\mathcal T}_N$ is a list of $N$ rooted spanning trees, one for each node of ${\mathcal G}$.  
We define their weight to be the product of the weight of its members: $\omega(\sigma) = \prod_{T^\sigma_l\in \sigma}\omega(T^\sigma_l)$.

We say a tree family $\sigma$ is {\it multi-constructable} when by utilizing all the edges in all the trees in $\sigma$, with their multiplicty, at least one other tree family can be constructed, say $\sigma'$.  
Put another way, $\sigma'$ can formed by swapping a collection of edges among the different rooted trees in $\sigma$.
By contrast,  a  {\it uniquely-constructable} tree family is the only tree family that can be constructed from its collection of edges.
There are no ways to swap edges among the trees to construct a new tree family.
\newline

\noindent{\it Scale of sensitivity.---}To study the dynamics, we perturb the rates $W_e(\lambda)$ by changing an external parameter.  The magnitude of the response is influenced by the scale [Eq. (2) main text],
\begin{equation}\label{dScale}
d_e=\partial_\lambda \ln W_e.
\end{equation}
Building from the perturbative-scale of each rate, we can assign a scale to the tree families through the topological scale factors [Eq. (4) main text],
\begin{equation}\label{G}
G_j^\sigma=\partial_\lambda \ln \omega(T^\sigma_j)=\sum_{e\in T^\sigma_j} d_e,
\end{equation}
for all $T_j^\sigma\in \sigma$.

\subsection*{Convex representation of response: Derivation of Eq. (5)}

Equation (5) of the main text demonstrates that the steady-state response has a convex representation. 
In this section, we provide a detailed derivation of this observation

Directly differentiating the Matrix Tree Theorem, \eqref{MTT}, leads to a convenient graph-theoretic representation of the logarithmic senstivities
\begin{equation}
\partial_{\lambda}\ln\pi_{i} = \partial_{\lambda}\ln\left(\omega\left(\mathcal{T}_{i}\right)\right)-\partial_{\lambda}\ln\left(\omega\left(\mathcal{T}\right)\right) = \frac{\partial_{\lambda}\omega\left(\mathcal{T}_{i}\right)}{\omega\left(\mathcal{T}_{i}\right)}-\sum_{j}\pi_{j}\frac{\partial_{\lambda}\omega\left(\mathcal{T}_{j}\right)}{\omega\left(\mathcal{T}_{j}\right)}.
\label{dlamlogpi}
\end{equation}
From here we focus on the first term on the right-hand side, as the second is merely its steady-state average. 

First, note that $\omega(\mathcal{T}_{i})=\sum_{T_i\in {\mathcal T}_i} \omega(T_i)$ is the sum of trees rooted at state $i$, allowing the derivative to neatly distribute to each tree. 
Thus, we first differentiate the definition of $\omega(T_i)$, \eqref{wdef},
\begin{equation}
\partial_\lambda \omega(T_i)=\partial_\lambda \left(\prod_{e\in T_i} W_e\right)=\sum_{e\in T_i} \partial_\lambda W_e \prod_{l\neq e} W_l = \omega(T_i) \sum_{e\in T_i} d_e ,
\end{equation}
where in the last equality we employed the definition of $d_e$, \eqref{dScale}.
From the linearity of the derivative we can extend this across the entire set ${\mathcal T}_i$,
\begin{equation}
\partial_{\lambda}\omega\left(\mathcal{T}_{i}\right) = \sum_{T_i\in\mathcal{T}_{i}}\omega\left(T_i\right)\sum_{e\in T_i}d_{e}.
\label{dlamwT}
\end{equation}\label{treeSetDerivative}

The next step in the evaluation of \eqref{dlamlogpi} is to divide \eqref{dlamwT} by $\omega({\mathcal T}_i)$ and multiply by one, 
\begin{equation}
\frac{\partial_{\lambda}\omega\left(\mathcal{T}_{i}\right)}{\omega\left(\mathcal{T}_{i}\right)} 
=  \left(\frac{\prod_{k\ne i}\omega\left(\mathcal{T}_{k}\right)}{\prod_{k\neq i}\omega\left(\mathcal{T}_{k}\right)}\right)\frac{1}{\omega\left(\mathcal{T}_{i}\right)}\sum_{T_i\in\mathcal{T}_{i}}\omega\left(T_i\right)\sum_{e\in T_i}d_{e}
= \frac{\prod_{k\ne i}\omega\left(\mathcal{T}_{k}\right)}{\prod_{k}\omega\left(\mathcal{T}_{k}\right)}\sum_{T\in\mathcal{T}_{i}}\omega\left(T_i\right)\sum_{e\in T_i}d_{e}.
\label{dwti_exp}
\end{equation}
 First, we evaluate the denominator, $\mathcal{Z}=\prod_{k}\omega(\mathcal{T}_{k})$. Each tree set weight $\omega(\mathcal{T}_{k})$ is  a sum over all trees rooted at $k$, making $\mathcal{Z}$ a product of sums. 
 When this product is expanded out, the result is a sum of products of $N$ tree weights, comprised of one tree for every root; namely, the weights of the tree families $\omega(\sigma)$.
Crucially, since $\mathcal{T}_{k}$ contains every possible $k$-rooted tree once, our expansion of $\mathcal{Z}$ must sum over every possible tree family exactly once, leading to ${\mathcal Z}=\sum_\sigma \omega(\sigma)$.
For the numerator $\sum_{T_i\in {\mathcal T}_i} \prod_k \omega(T_k)\sum_{e\in T_i} d_e$ the analysis is nearly identical, except each term in the sum is multiplied by the topological scaling factor at that root, \eqref{G}.

Putting this all together, we find
\begin{equation}
\frac{\partial_{\lambda}\omega\left(\mathcal{T}_{i}\right)}{\omega\left(\mathcal{T}_{i}\right)} = \frac{1}{\mathcal{Z}}\sum_{\sigma}\omega\left(\sigma\right)G_{i}^{\sigma}
\label{dwti_tf}
\end{equation}
\begin{equation}
\mathcal{Z} = \prod_{k}\omega\left(\mathcal{T}_{k}\right) = \sum_{\sigma}\omega\left(\sigma\right),
\label{Zexp}
\end{equation}
which upon substitution into \eqref{dlamlogpi} produces Eq. (5) of the main text,
\begin{equation}\label{convex}
\partial_\lambda\ln\pi_i = \sum_\sigma \frac{\omega(\sigma)}{\mathcal Z}\left(G^\sigma_i-\langle G^\sigma\rangle\right).
\end{equation}
reprinted here for convenience.

The nature of \eqref{convex} as a convex representation encourages us to define its possible values within a geometric space.
Specifically, let the response vector, $\vec{r}:\{W_{e}|\pi,d_{e}\}\to\mathbb{R}^{N}$, be a mapping from the space of all possible transition rates under a given steady state distribution and set of perturbation scalings to an $N$ dimensional vector space such that $r_{i}=\partial_{\lambda}\ln (\pi_{i})$.
As we shall soon show though, the range of $\vec{r}$ does not extend to the entirety of $\mathbb{R}^{N}$.
Rather it is a finite subset of dimension at most $N-1$ which we denote as the \textit{response space}, $\mathcal{R}$.
As noted in the main text, $\vec{r}$ is restricted by the normailzation condition, $\langle\partial_{\lambda}\ln (\pi_{i})\rangle=\sum_{i}\pi_{i}r_{i}=0$, meaning that it is confined to the ``0-mean hyperplane", $\mathcal{H}=\{\vec{x}\in\mathbb{R}^{N}|\sum_{i}\pi_{i}x_{i}=0\}$.
Crucially, the tree family vertices, $v_{j}^{\sigma}=G_{j}^{\sigma}-\langle G^{\sigma}\rangle$, also satisfy this condition and are thus similarly confined to $\mathcal{H}$.
As such, when the $N-1$ dimensional convex hull of $\{v_{j}^{\sigma}\}$ is taken within $\mathcal{H}$, $\vec{r}$ must be further confined to be within this hull due to the convex nature of \eqref{convex}, thus limiting its possible locations (and by definition the entirety of $\mathcal{R}$) in $\mathcal{H}$ to such a finite subset.
In the main text, we conjecture that $\mathcal{R}$ can be further restricted to to convex hull of only the uniquely constructable tree family vertices due to the equivalency of unique constructability and physical realizability, as we now show.

\subsection*{Every physical-realizable tree family is uniquely constructable}
As discussed in the main text, the fact that \eqref{convex} is a convex combination of all the topological scaling factors does not guarantee that the response can achieve the values $\partial_\lambda\ln \pi_j = v^\sigma_j$ for some collection of rates: not all tree families are {\it physically realizable}.
In this section, we prove a graphical characterization of all physically-realizable tree families by demonstrating that they must be uniquely constructable:
\begin{theorem}
A tree family is physically realizable if and only if it is uniquely constructable.
\label{constructtheorem}
\end{theorem}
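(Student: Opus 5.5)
We will read ``physically realizable'' as in the main text: a tree family $\sigma$ is physically realizable when there exist rates, at fixed $\pi$ and $d_e$, with $\omega(\sigma)/\mathcal{Z}$ arbitrarily close to $1$, so that the response approaches $v^\sigma$. Since the weights $\omega(\sigma)/\mathcal Z$ are invariant under the rescaling $W_e\to W_e/\tau_j$, the steady state can always be fixed afterwards by choosing the $\tau_j$. It therefore suffices to prove that $\sup_{W}\omega(\sigma)/\mathcal{Z}=1$ if and only if $\sigma$ is uniquely constructable.

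The direction ``realizable $\Rightarrow$ uniquely constructable'' we prove by contrapositive, and it is short. Let $m_\sigma(e)$ be the number of trees of $\sigma$ that contain edge $e$. Then
\begin{equation*}
\omega(\sigma)=\prod_e W_e^{m_\sigma(e)},
\end{equation*}
so the weight depends only on the multiplicity vector $m_\sigma$. If $\sigma$ is multi-constructable, some $\sigma'\neq\sigma$ has $m_{\sigma'}=m_\sigma$, and hence $\omega(\sigma')=\omega(\sigma)$ for every choice of rates. By \eqref{Zexp} this gives $\mathcal Z\ge 2\omega(\sigma)$, so $\omega(\sigma)/\mathcal Z\le 1/2$ uniformly, and $\sigma$ is not realizable.

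For ``uniquely constructable $\Rightarrow$ realizable'' we use a large-deviation scaling of the rates, $W_e=e^{-s c_e}$ with $s\to\infty$. This gives
\begin{equation*}
\omega(\sigma')/\omega(\sigma)=\exp\!\big(-s\,c\cdot(m_{\sigma'}-m_\sigma)\big),
\end{equation*}
and since there are finitely many families, $\omega(\sigma)/\mathcal Z\to1$ as soon as $c\cdot m_{\sigma'}>c\cdot m_\sigma$ for every $\sigma'\neq\sigma$. By unique constructability every such $\sigma'$ has $m_{\sigma'}\neq m_\sigma$. So the task reduces to a purely polyhedral claim: $m_\sigma$ is an exposed vertex of the polytope $P=\mathrm{conv}\{m_{\sigma'}\}$, which is the Minkowski sum $P_1+\dots+P_N$ of the rooted-tree (arborescence) polytopes $P_k=\mathrm{conv}\{\chi_{T_k}\}$. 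Such a $c$ also makes each $T^\sigma_j$ the unique heaviest $j$-rooted tree, which is the condition $\omega(T^\alpha_j)\gg\omega(T_j)$ quoted in the main text. We would then verify the claim directly on the 3-state graph as a sanity check.

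The main obstacle is this polyhedral claim. We must show that if $m_\sigma=\sum_k\chi_{T^\sigma_k}$ is not a vertex of $\sum_k P_k$, then the same integer vector can be written as $\sum_k\chi_{T'_k}$ with a different choice of rooted trees. Our first attempt is as follows. Take a generic $c$ exposing the minimal face $F=\sum_k F_k$ of $P$ containing $m_\sigma$, where $F_k$ is the face of $P_k$ optimal for $c$; each $T^\sigma_k$ is a vertex of $F_k$. Non-vertexhood means $m_\sigma$ is not the unique optimum, which gives a direction $u\neq0$ with $m_\sigma\pm\epsilon u\in F$. Using adjacency on arborescence polytopes (adjacent vertices differ by single edge exchanges, from the matroid-intersection description of rooted trees as common bases of the graphic and in-degree partition matroids), we would decompose $u$ into elementary exchange moves on individual trees. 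The move that raises the multiplicity of an edge $e$ in tree $T_k$ must then be compensated by a move lowering $e$ in some other tree $T_l$. We would aim to chain these into a cyclic swap of edges among trees that keeps every member a valid rooted tree, producing $\sigma'\neq\sigma$ with $m_{\sigma'}=m_\sigma$. If the direct exchange argument stalls, the fallback is to invoke an integer-decomposition property for Minkowski sums of these matroid-intersection polytopes (in the spirit of Edmonds' branching theorem). We would apply it to the lattice point $m_\sigma$ lying on a positive-dimensional face, to obtain two distinct integral decompositions directly.
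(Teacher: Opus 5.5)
Your ``only if'' direction is correct and is the paper's argument. A second family with the same multiplicity vector has the same weight for every choice of rates, so $\omega(\sigma)/\mathcal{Z}\le 1/2$; the paper phrases this through dominance. Your scaling $W_e=e^{-sc_e}$ is the paper's rescaling $W_e\to W_e^{\gamma}$. Your reduction of the converse to finding $c$ with $c\cdot m_{\sigma'}>c\cdot m_\sigma$ for all $\sigma'\neq\sigma$ is exact; it is in fact equivalent to realizability, since $\omega(\sigma)/\mathcal{Z}>1/2$ already forces these strict inequalities. It corresponds to the paper's requirement that the system of local-rate constraints be feasible.

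The gap is that this vertex/feasibility claim \emph{is} the hard direction, and you do not prove it. The paper supplies it with the argument your ``cyclic swap'' only gestures at. An inconsistency among local-rate constraints must be a chain of constraints from distinct trees of $\sigma$ in which every upper-bound edge reappears as a lower-bound edge. Such a chain is read as a simultaneous edge swap producing a different family, contradicting unique constructability.

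None of your proposed routes reaches that conclusion as stated.

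First, the adjacency fact you invoke is false. On the 3-state graph with root $r$ and other nodes $a,b$, the three $r$-rooted trees $\{a\to r,b\to r\}$, $\{a\to r,b\to a\}$, $\{a\to b,b\to r\}$ have affinely independent incidence vectors. So the arborescence polytope is a triangle, and the last two trees are adjacent although they differ in both out-edges. As for bipartite matchings, adjacency on matroid-intersection polytopes can require longer alternating exchanges.

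Second, and more fundamentally, even a correct decomposition of $u$ only gives a nonnegative rational dependency $\sum_k\sum_{T'}\lambda_{k,T'}\bigl(\chi_{T'}-\chi_{T^\sigma_k}\bigr)=0$ over competitors $T'\neq T^\sigma_k$ rooted at $k$. This is precisely Gordan's certificate of infeasibility, and it may use several competitors for the same root with different weights. Extracting \emph{one} replacement tree per root whose edges reproduce $m_\sigma$, all still acyclic, is the whole content of the lemma. Your proposal leaves this as ``we would aim to.''

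Third, the integer-decomposition fallback is beside the point. It guarantees that a lattice point of $\sum_k P_k$ has \emph{some} decomposition, but $m_\sigma$ already has one, namely $\sigma$; nothing here yields a second.

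Finally, if you follow the paper's route, note that its argument presupposes a certificate with a single constraint per tree and unit weights. Justifying the reduction from a general Gordan certificate to that form is exactly where the multiplicities you worried about must be dealt with.
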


The proof of Theorem \ref{constructtheorem} will require introducing and proving a number of concepts.
The first is to precisely define physical realizablility, recognizing that it is a property that can only be arbitrarily-well approximated for any particular rate matrix:
\begin{definition}
A tree family $\sigma$ is physically realizable if for every $\delta>0$ there exists a set of transition rates, $\{W_{e}\}$, such that $\omega(\sigma)/\mathcal{Z}>1-\delta$.
\label{physreal}
\end{definition}
 As $\delta$  gets smaller, $\omega(\sigma)$ must take up more of the total tree family weight $\mathcal{Z}$, meaning that it must become the most highly-weighted tree family, dominating the other tree families.
This suggests the following definition:
\begin{definition}
For a given set of rates $\{W_e\}$, a tree family $\sigma$ is said to be dominant if the weight of every $T_i^\sigma\in \sigma$ is the largest, that is $\omega(T_{i}^\sigma)>\omega(T_{i}),\ \forall\ T_{i}\in\mathcal{T}_{i}$ and $T_{i}\neq T_{i}^\sigma$.
\label{dominant}
\end{definition}
\noindent Thus a necessary condition for a tree family to be physically realizable is that it must be possible to be a dominant tree family, but as we now show all dominant tree families are uniquely constructable.
This is to the first part of Theorem \ref{constructtheorem}:
\begin{lemma}
Every dominant tree family is uniquely constructable.
\label{edgeswap}
\end{lemma}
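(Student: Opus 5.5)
We argue by contraposition: if a tree family $\sigma$ is multi-constructable, it cannot be dominant for any choice of positive rates $\{W_e\}$. Suppose $\sigma$ is multi-constructable. By definition there is a tree family $\sigma'\neq\sigma$ built from exactly the same multiset of edges. Since both use each directed edge $e$ the same number of times, we have the edge-multiplicity identity
\begin{equation*}
\prod_{k=1}^N \omega(T_k^\sigma)=\prod_{e}W_e^{m_e}=\prod_{k=1}^N \omega(T_k^{\sigma'}),
\end{equation*}
where $m_e$ is the common multiplicity of $e$. This holds for every choice of rates, because it is an identity between monomials in the $W_e$.

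Next we compare root by root. Since $\sigma'\neq\sigma$, the set $D=\{k: T_k^{\sigma'}\neq T_k^{\sigma}\}$ is nonempty. If $\sigma$ were dominant, then for every $k\in D$ Definition~\ref{dominant} would give the strict inequality $\omega(T_k^\sigma)>\omega(T_k^{\sigma'})$. Note that $T_k^{\sigma'}$ is a legitimate competitor, since it lies in $\mathcal{T}_k$. For $k\notin D$ the two trees coincide, so the weights are equal.

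All weights are strictly positive, so multiplying these relations over $k$ gives
\begin{equation*}
\prod_k\omega(T_k^\sigma)>\prod_k\omega(T_k^{\sigma'}).
\end{equation*}
This contradicts the monomial identity above. Hence no rates make a multi-constructable family dominant. Equivalently, every dominant family is uniquely constructable.

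The argument is short, and the only point needing care is the well-definedness of the edge multiset. Parallel edges in $\mathcal{E}_{ij}$ must be treated as distinct edges with their own rates $W_e$. Multi-constructability must then mean an equal multiset of these labeled edges, not just equal node pairs. With that reading the monomial identity holds verbatim, and the positivity of the $W_e$, which comes from their being rates on a strongly connected graph, justifies multiplying the strict inequalities.
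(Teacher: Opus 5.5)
Your proof is correct and uses the same idea as the paper: a second family built from the same edge multiset has an identical weight monomial, which is incompatible with dominance. Your root-by-root multiplication of the strict inequalities over the set $D$, using positivity of the weights, makes explicit the step from the per-root Definition~\ref{dominant} to the family-level comparison, which the paper asserts only as ``$\sigma$ has the highest weight.''
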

\begin{proof}
Let $\sigma$ be a dominant tree family. Since the weight of a tree is the product of the weight of its edges, the weight of the tree family, $\omega(\sigma)=\prod_{T_i\in\sigma}\omega(T_i)$, is independent of how the trees are constructed from the available edges. Thus, if there was another tree family, $\sigma'\neq\sigma$, that uses the exact same edges with the same multiplicities it would have the same weight, $\omega(\sigma')=\omega(\sigma)$. This violates the assumption that $\sigma$ has the highest weight, implying that there cannot be any construction of trees other than $\sigma$ from the available edges; or equivalently, $\sigma$ is uniquely constructable.
\end{proof}

Next, we prove the converse, which we will accomplish by explicitly constructing a collection of rates for which a tree family $\sigma$ is physically realizable.
At a minimum, this means choosing rates so that for every $T_i^\sigma\in \sigma$, we have $\omega(T_i^\sigma)>\omega(T_i)$.
This is easy to arrange for tree families that contains edges not in $\sigma$.
We simply set the associated $W_e=0$, forcing all those tree families to have zero weight while maintaining $\omega(\sigma)>0$. 
The challenge arises when we consider those tree families $\sigma'$ that are built entirely from edges in $\sigma$ but with different multiplicities.
The weights of these tree families must be nonzero, $\omega(\sigma')>0$, but we need to arrange it so that $\omega(\sigma')<\omega(\sigma)$ as well.
Here, the idea is to compare the weight of each $T_i^\sigma\in \sigma$ to the weight of all other trees that can be built from the edges in $\sigma$.
This leads to constraints that the weights must satisfy:
\begin{definition}
$n^{th}$-order local-rate constraint: Consider a tree in a dominant tree family, $T_i^\sigma \in \sigma$ and a distinct tree rooted at the same node $T'_i$, but composed only of edges present in $\sigma$.
The dominance of $\sigma$ requires $\omega(T_{i}^\sigma)>\omega(T_{i}')$. 
Removing all common-edge weights reduces this condition to $\prod_{e\in\mathcal{E}^\sigma}W_{e}>\prod_{e\in\mathcal{E}'}W_{e}$, where $\mathcal{E}^\sigma$ is the set of edges that are in $T_{i}^\sigma$ but not $T_{i}'$ and $\mathcal{E}'$ is the set of edges in $T_{i}'$ but not $T_{i}^\sigma$. If there are $n$ such differing edges, we call this inequality an $n^{th}$-order local-rate constraint.
\label{swapcon}
\end{definition}
\noindent Formulated another way, the edges $\mathcal{E}^\sigma \in T_i^\sigma$ can be swapped out and replaced by $\mathcal{E}'$ to form a new tree $T_i'=(T_i^\sigma\setminus\mathcal{E}^\sigma)\cup\mathcal{E}'$.
The newly formed tree must then have a lower weight.
For example, by letting $W_{ab}$ represent a particular $\{b\to a\}$ transition rate, every $2^{\text{nd}}$-order local-rate constraint will come from examining some $T_i^\sigma$, and have the form 
\begin{equation}
W_{l j}W_{mk}>W_{nj}W_{pk},
\end{equation}
comparing rates exiting 2 distinct nodes, here $j$ and $k$.
The implication is that it is possible to remove the upper-bound edges $\{\{j\to l\},\{k\to m\} \}$ from $T_i^\sigma$ and replace them with the lower bound edges $\{\{j\to n\},\{k\to p\}\}$ to form a valid tree still rooted at $i$, but not in $\sigma$.

For a tree family to be dominant every local-rate constraint must be satisfied at every order $n$ and at every node $i$.  
It is not hard to imagine that we can choose our rates to satisfy these constraints locally at a particular node.
But they still must be consistent globally across all the nodes in the network.
That this is possible for a uniquely-constructable tree is the content of the following:
\begin{lemma}
Every uniquely-constructable tree family is physically realizable.
\label{uniqueConstruct}
\end{lemma}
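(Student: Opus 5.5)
The plan is to build explicit rates in two stages. First we kill every edge not used by $\sigma$. Then we choose a one-parameter family of rates on the remaining edges under which $\omega(\sigma)$ dominates every surviving tree family.

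\emph{Stage 1.} Let $E(\sigma)$ be the set of directed edges appearing in at least one tree of $\sigma$. Set $W_e=0$ for every $e\notin E(\sigma)$. Every tree family that uses such an edge then has zero weight, so $\mathcal{Z}=\sum_{\sigma'\in\Sigma}\omega(\sigma')$, where $\Sigma$ is the finite set of tree families whose trees use only edges of $E(\sigma)$. Each $\sigma'\in\Sigma$ has an edge-multiplicity vector $m(\sigma')\in\mathbb{Z}_{\ge0}^{E(\sigma)}$. Because tree weights are products of edge weights, $\omega(\sigma')=\prod_e W_e^{m_e(\sigma')}$.

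\emph{Stage 2.} On $E(\sigma)$ take $W_e=x^{c_e}$ with real exponents $c_e$ to be chosen, and send $x\to\infty$. Then
\[
\frac{\omega(\sigma)}{\mathcal{Z}}=\Big(\sum_{\sigma'\in\Sigma}x^{\,c\cdot(m(\sigma')-m(\sigma))}\Big)^{-1}.
\]
Uniquely-constructable means precisely that $m(\sigma')\neq m(\sigma)$ for every $\sigma'\neq\sigma$ in $\Sigma$. Suppose $c$ can be chosen with $c\cdot m(\sigma')<c\cdot m(\sigma)$ for all $\sigma'\neq\sigma$. Then every term except $\sigma'=\sigma$ vanishes as $x\to\infty$, and since $\Sigma$ is finite, $\omega(\sigma)/\mathcal{Z}>1-\delta$ for $x$ large enough. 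This is Definition~\ref{physreal}, so the lemma reduces to a purely combinatorial statement: $m(\sigma)$ is a vertex of the polytope $\mathrm{conv}\{m(\sigma'):\sigma'\in\Sigma\}$. This is the finite-dimensional, global version of requiring that all $n^{th}$-order local-rate constraints of Definition~\ref{swapcon} hold simultaneously. Once a strictly maximizing $c$ exists, every such local constraint is satisfied automatically, and $\sigma$ is dominant in the sense of Definition~\ref{dominant}.

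\emph{Main obstacle: the vertex property.} By a separating-hyperplane (Farkas) argument, it suffices to rule out a relation $m(\sigma)=\sum_k\lambda_k m(\sigma_k)$ with $\lambda_k\ge0$, $\sum_k\lambda_k=1$, and all $\sigma_k\neq\sigma$. The polytope in question is $\mathrm{conv}(m(\Sigma))$, which equals the Minkowski sum $P_1+\cdots+P_N$, where $P_i$ is the convex hull of indicator vectors of $i$-rooted spanning trees on $E(\sigma)$.

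The line of attack is to use the integrality of these arborescence polytopes. Each $P_i$ is the intersection of two matroid polytopes: the graphic matroid, and the partition matroid allowing one out-edge per non-root node. Edmonds' intersection theorem then gives total dual integrality, and I would use it to argue that $P_1+\cdots+P_N$ has the integer decomposition property at $m(\sigma)$. Concretely, a fractional relation as above should yield an integral re-decomposition of $m(\sigma)$ into arborescences $(T_1',\dots,T_N')$ with $T_i'$ rooted at $i$ and $(T_1',\dots,T_N')\neq\sigma$. That would contradict unique constructability.

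If this polyhedral route stalls, the fallback is a direct combinatorial construction with hierarchical scale separation, as in the examples of Fig.~\ref{fig:hill} and the nested-hysteresis models. Order the edges and assign exponents $c_e$ growing very rapidly along that order. Then argue inductively, via single-edge exchanges between trees of $\sigma$, that any $\sigma'$ beating $\sigma$ lexicographically would exhibit an alternative construction from the same multiset of edges.
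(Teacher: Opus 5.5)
Your Stages~1 and~2 are correct and follow the paper's proof: zeroing the edges outside $E(\sigma)$ is the paper's first step, and $W_e=x^{c_e}$ with $x\to\infty$ is its power rescaling $W_e\to W_e^{\gamma}$. The reduction is in fact exact. If $m(\sigma)=\sum_k\lambda_k m(\sigma_k)$ with $\sigma_k\neq\sigma$, weighted AM--GM gives $\omega(\sigma)=\prod_k\omega(\sigma_k)^{\lambda_k}\le\max_k\omega(\sigma_k)$, so $\omega(\sigma)/\mathcal{Z}\le 1/2$ for every choice of rates. The lemma is therefore \emph{equivalent} to your vertex property, which is the same as the paper's claim that the full system of local-rate constraints is feasible.

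That property is the whole content of the lemma, and your proposal does not prove it. This is a genuine gap, and the route you sketch cannot close it as stated. Invoking the integer decomposition property ``at $m(\sigma)$'' is vacuous, because $\sigma$ itself already decomposes $m(\sigma)$. What you actually need is different: the fiber $F=\{(x_1,\dots,x_N)\in P_1\times\cdots\times P_N:\ \sum_i x_i=m(\sigma)\}$ must contain a second \emph{integral} point whenever it is not the single point given by the edge-indicator vectors of $T_1^\sigma,\dots,T_N^\sigma$.

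Integrality or TDI of each $P_i$ separately does not transfer to $F$. The fiber couples the graphic-matroid constraints of the $N$ copies to a bipartite transportation structure at every node (one out-edge per root, $m_e(\sigma)$ roots per edge), so it is effectively a three-matroid intersection. Such fibers are genuinely non-integral, already in the paper's 3-state example:
\begin{itemize}
\item Write the three $k$-rooted trees as $(s_k,t_k)\in\{(0,0),(0,1),(1,1)\}$. Here $s_k$ and $t_k$ indicate whether nodes $k+1$ and $k+2$ (mod 3) use their forward edge in $0\to1\to2\to0$.
\item The fiber over the common multiplicity vector of the size-4 class $\{23,\dots,26\}$ is cut out by $t_1+s_2=t_2+s_0=t_0+s_1=1$.
\item Eliminating the $t_k$, this is $\{s\ge 0:\ s_0+s_1\le 1,\ s_1+s_2\le 1,\ s_0+s_2\le 1\}$, the fractional stable-set polytope of a triangle.
\item Its four integral vertices are families 23--26. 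Its fifth vertex $(\tfrac12,\tfrac12,\tfrac12)$ is the root-by-root average of families 0 and 10, consistent with $w^{(23)}=\sqrt{w^{(0)}w^{(10)}}$.
\end{itemize}
So no polyhedral-integrality argument can give ``fractional alternative $\Rightarrow$ integral alternative.'' That step must come from a combinatorial exchange argument that uses unique constructability directly. Your fallback does not supply one: a lexicographic $c$ is just a particular $c$, and its inductive step is the same unproved claim.

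The paper fills this step with exactly such an exchange argument, and this is the idea your proposal is missing. It reads an inconsistent set of local-rate constraints as a collection of edge swaps, each valid within its own tree of $\sigma$. Performing them simultaneously builds a different tree family from the same edge multiset, contradicting unique constructability.

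Be aware that the paper's version assumes the inconsistent constraints come from distinct trees, each used once. A general Motzkin certificate may combine several constraints from the same $T_i^\sigma$, or use them with multiplicity, and those swaps cannot all be applied to a single copy of $T_i^\sigma$. A complete proof along either line must therefore show how to extract one simultaneous exchange from such a certificate. Your polyhedral reformulation states that target cleanly, but as written it leaves it unproved.
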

\begin{proof}
Consider a uniqeuly-constructable family $\sigma$. For every edge not in $\sigma$, set $W_e=0$, making every tree family $\sigma'$ with an edge not in $\sigma$ have zero weight $\omega(\sigma)>\omega(\sigma')=0$.
For $\sigma$ to be dominant, the remaining $\{W_e\}$ must satisfy the finite system of inequalities generated by all the local-rate constraints at every order $n$ and node $i$.
The first thing we need to check is that this system of inequalities is feasible, that is admits a solution.  
All local-rate constraints that come from the same tree $T_i^\sigma\in\sigma$ are internally consistent, as the the upper-bound and lower-bound edges are disjoint: the upper-bound edges must be in $T_i^\sigma$, while the lower-bound edges necessarily are not.
An inconsistency could arise between the local-rate constraints generated by different trees in $\sigma$ that have distinct roots.
This would occur if there were a collection of local-rate constraints each generated by separate trees in $\sigma$ where every upper-bound edge appears in another inequality as a lower-bound edge.
However, each inequality represents a potential exchange of edges where the upper-bound edges can be swapped out for the lower-bound edges.
If the edges appear both as lower and upper bounds, this would imply that each edge appears in more than one tree and that those edges could be swapped between those trees allowing the construction of another valid tree family.
As this would violate the unique-constructability of $\sigma$, we can conclude that the system of local-rate constraints are indeed feasible.

Having established that there exists some collection of rates $\{W_e\}$ for which $\sigma$ is dominant, all that is left is to exhibit a particular solution that agrees with Definition \ref{physreal} for physical realizability.
To this end, let us index all the local-rate constraints generated when comparing $\sigma$ to every nondominant tree family, $\beta\in B(\sigma)$, and associate each with a constraint ratio $\epsilon_{\beta}=\prod_{e\in E'}W_{e}/\prod_{e\in E^{\sigma}}W_{e}<1$.
We further note that for any feasible set of rates $\{W_e\}$, we can generate another feasible collection via a power-law rescaling $W_{e}\to W_{e}^{\gamma}$ for some $\gamma>0$, as positive powers are monotonic with respect to the inequalities in the local-rate constraints.
Such a scaling also transforms the constraint ratios as $\epsilon_{\beta}\to\epsilon_{\beta}^{\gamma}$.

%

From here we compare the weights of the nondominant tree families to the dominant tree family by considering how each nondominant tree family $\sigma'$ can be transformed into the dominant one, $\sigma$.
One way to do so is to consider each tree in $\sigma'$ individually and, using the same notation as in Definition \ref{swapcon}, exchange all those edges in $E'$ for $E^{\sigma}$ that appeared in the corresponding local-rate constraint.
This transforms $\omega(\sigma')$ by dividing away all rates that come from edges in $E'$ and multiplying by all rates that come from edges in $E^\sigma$, which is equivalent to dividing by the constraint ratio, $\epsilon_{\beta}$.
Repeating this for each tree in the family then yields the relationship $\omega(\sigma')=\omega(\sigma)\prod_{\beta\in B(\sigma'\to\sigma)}\epsilon_{\beta}$ where $B(\sigma'\to \sigma)\subseteq B(\sigma)$ is the set of local-rate constraints that derive from comparing $\sigma'$ to $\sigma$.
Since each $\epsilon_{\beta}$ is necessarily smaller than 1, this immediately implies $\omega(\sigma')\le \epsilon_{M}\omega(\sigma)$, where $\epsilon_{M}=\max_{\beta\in B(\sigma)}\epsilon_{\beta}$.
We next rescale all the rates $W_e \to W_e^\gamma$, so that for the new feasible collection of rates the weights of the tree families verify
$\omega(\sigma')\le \epsilon_{M}^{\gamma} \omega(\sigma)$.
Thus, what we have constructed is a collection of rates for which  the $S=||\{\sigma\}||-1$ nondominant tree families have weights that are all smaller than that of the dominant tree family by a factor of $\epsilon_{M}^{\gamma}$, which can be made arbitrarily small by increasing $\gamma$.
Together, these observations allow us to bound the contribution of $\sigma$ to the sum in \eqref{convex} as
\begin{equation}
\frac{\omega(\sigma)}{{\mathcal Z}} = \frac{\omega(\sigma)}{\omega(\sigma)+\sum_{\sigma'\neq\sigma}\omega(\sigma')}\ge \frac{1}{1+S\epsilon_{M}^{\gamma}}> 1- S\epsilon_{M}^{\gamma},
\end{equation}
So, we can satisfy the definition of physical realizability for any $\delta$ by choosing $\gamma =\log_{\epsilon_{M}}(\delta/S)$.
\end{proof}

\begin{proof}[Proof of Theorem \ref{constructtheorem}] Follows from  Lemma \ref{edgeswap} and Lemma \ref{uniqueConstruct}.
\end{proof}

\subsection*{Convex parameters for the 3-state graph}
In Eq. (6) of the main text we posit that the logarithmic response can be written as convex sum of the topological scale factors for only the physically-realizable tree families: 
\begin{equation}\label{convexUnique}
\partial_\lambda \ln \pi_j = \sum_\alpha \theta^\alpha (G^\alpha_j - \langle G^\alpha\rangle),
\end{equation}
which should be contrasted  with \eqref{convex} where the sum is over every tree family.
As evidence of the veracity of this conjecture, we construct a valid collection of $\theta^\alpha$ for the $N=3$ triangle graph depicted in Fig. \ref{treedefs}.
We accomplish this goal by rearranging the sum in \eqref{convex} by re-expressing the topological scaling factors for the multi-constructable families in terms of the uniquely-constructable ones.

We start by listing all 27 possible tree families, 11 of which are uniquely constructable (Fig. \ref{treefams_uc}). 
The remaining 16 multi-constructable families can be split into 7 different groups that we call \textit{equivalency classes}, since they are formed from the same collection of edges and thus all have the same weight.
There are 6 total equivalency classes with 2 tree families each (Fig. \ref{treefams_mc2}) and a single equivalency class with 4 tree families (Fig. \ref{treefams_mc4}).

From these graphs, the various topological scaling factors can be deduced. For instance, in family 0 (top row, Fig. \ref{treefams_uc}), the 0-rooted tree contains the $\{1\to 0\}$ and $\{2\to 1\}$ edges, giving $G_{0}^{(0)}=d_{01}+d_{12}$. We can also note that families 1, 2, and 5 also use this same 0-rooted tree and thus have the same topological scaling factor at state 0 ($G_{0}^{(0)}=G_{0}^{(1)}=G_{0}^{(2)}=G_{0}^{(5)}$). We can use this method of generating the topological scaling factors to more directly compare the uniquely and multi-constructable tree families. 
As an example, take the equivalency class $\{11,12\}$ (first two rows, Fig. \ref{treefams_mc2}). 
Notice that by exchanging the 0-rooted trees between the two families that comprise this equivalency class we reproduce the two uniquely-constructable families 3 and 6.
We denote this congruence as $\{11,12\}\cong\{3,6\}$. 
This further implies the sums of topological scaling factors are equal,  $G_{i}^{(11)}+G_{i}^{(12)}=G_{i}^{(3)}+G_{i}^{(6)}$ for any $i$. 
Notably, such congruencies with uniquely constructable families exist for every size 2 equivalency class;
\begin{align}
&\{11,12\}\cong\{3,6\}, \quad \{13,14\}\cong\{1,4\}, \quad \{15,16\}\cong\{2,5\}, \nonumber\\
&\{17,18\}\cong\{5,8\}, \quad \{19,20\}\cong\{6,9\}, \quad \{21,22\}\cong\{4,7\}.
\label{mc2congs}
\end{align}
Since tree families within the same equivalency class share a weight, we can simplify their contribution to sum in \eqref{dwti_tf}. Continuing with the example of the $\{11,12\}$ equivalency class with $w^{(11)}$ and $w^{(12)}$, this takes the form
\begin{equation}
\frac{1}{\mathcal{Z}}\left(w^{\left(11\right)}G_{i}^{\left(11\right)}+w^{\left(12\right)}G_{i}^{\left(12\right)}\right) = \frac{w^{\left(11\right)}}{\mathcal{Z}}\left(G_{i}^{\left(11\right)}+G_{i}^{\left(12\right)}\right) = \frac{w^{\left(11\right)}}{\mathcal{Z}}\left(G_{i}^{\left(3\right)}+G_{i}^{\left(6\right)}\right).
\label{1112to36}
\end{equation}
This observation implies that the topological scaling factors for the $\{11,12\}$ equivalency class can be removed from the sum.
Continuing in this way, the topological scaling factors for every of size 2 equivalency classes can be entirely removed from the sum in \eqref{dwti_tf}.

Unfortunately, the size 4 equivalency class, $\{23,24,25,26\}$ (Fig. \ref{treefams_mc4}), is notably more complicated. There is no combination of 4 uniquely constructable families that is congruent to this equivalency class. However, we can combine it with families 0 and 10 to produce the relation $\{0,10,23,24,25,26\}\cong\{1,2,3,7,8,9\}$, which in turn provides us with

\begin{align}
&\frac{1}{\mathcal{Z}}\left(w^{\left(23\right)}G_{i}^{\left(23\right)}+w^{\left(24\right)}G_{i}^{\left(24\right)}+w^{\left(25\right)}G_{i}^{\left(25\right)}+w^{\left(26\right)}G_{i}^{\left(26\right)}\right) = \nonumber\\
&\frac{w^{\left(23\right)}}{\mathcal{Z}}\left(G_{i}^{\left(1\right)}+G_{i}^{\left(2\right)}+G_{i}^{\left(3\right)}+G_{i}^{\left(7\right)}+G_{i}^{\left(8\right)}+G_{i}^{\left(9\right)}-G_{i}^{\left(0\right)}-G_{i}^{\left(10\right)}\right)
\label{alltrunksGeq}
\end{align}

\noindent Combining all of these results together transforms \eqref{dwti_tf} into

\begin{align}
\frac{\partial_{\lambda}\omega\left(\mathcal{T}_{i}\right)}{\omega\left(\mathcal{T}_{i}\right)} = \frac{1}{\mathcal{Z}}&\left\lbrack\left(w^{\left(0\right)}-w^{\left(23\right)}\right)G_{i}^{\left(0\right)}+\left(w^{\left(1\right)}+w^{\left(13\right)}+w^{\left(23\right)}\right)G_{i}^{\left(1\right)}+\left(w^{\left(2\right)}+w^{\left(15\right)}+w^{\left(23\right)}\right)G_{i}^{\left(2\right)}+\right. \nonumber\\
&\left(w^{\left(3\right)}+w^{\left(11\right)}+w^{\left(23\right)}\right)G_{i}^{\left(3\right)}+\left(w^{\left(4\right)}+w^{\left(13\right)}+w^{\left(21\right)}\right)G_{i}^{\left(4\right)}+\left(w^{\left(5\right)}+w^{\left(15\right)}+w^{\left(17\right)}\right)G_{i}^{\left(5\right)}+ \nonumber\\
&\left(w^{\left(6\right)}+w^{\left(11\right)}+w^{\left(19\right)}\right)G_{i}^{\left(6\right)}+\left(w^{\left(7\right)}+w^{\left(21\right)}+w^{\left(23\right)}\right)G_{i}^{\left(7\right)}+\left(w^{\left(8\right)}+w^{\left(17\right)}+w^{\left(23\right)}\right)G_{i}^{\left(8\right)}+ \nonumber\\
&\left.\left(w^{\left(9\right)}+w^{\left(19\right)}+w^{\left(23\right)}\right)G_{i}^{\left(9\right)}+\left(w^{\left(10\right)}-w^{\left(23\right)}\right)G_{i}^{\left(10\right)}\right\rbrack.
\label{dwti_basered}
\end{align}

Equation (\ref{dwti_basered}) is almost, but not quite entirely in the form dictated by \eqref{convexUnique} [Eq. 6 main text]. 
The parameters associated to tree families 0 and 10 are not guaranteed to be nonnegative. 
This can be rectified by noting a congruency within the uniquely constructable families themselves: $\{0,0,7,8,9\}\cong\{1,2,3,10,10\}$ (note the double presence of 0 and 10 on either side of this congruency, meaning two copies of each are needed). The existence of such a congruency implies
\begin{equation}
2G_{i}^{\left(0\right)}+G_{i}^{\left(7\right)}+G_{i}^{\left(8\right)}+G_{i}^{\left(9\right)}-G_{i}^{\left(1\right)}-G_{i}^{\left(2\right)}-G_{i}^{\left(3\right)}-2G_{i}^{\left(10\right)} = 0.
\label{G0comb}
\end{equation}

\noindent Since the right-hand side of \eqref{G0comb} is 0, we can freely add any multiple of its left-hand side to \eqref{dwti_basered}. Letting $\eta$ represent such a multiplicative factor finally yields

\begin{equation}
\frac{\partial_{\lambda}\omega\left(\mathcal{T}_{i}\right)}{\omega\left(\mathcal{T}_{i}\right)} = \sum_{\alpha=0}^{10}\theta^{\left(\alpha\right)}G_{i}^{\left(\alpha\right)},
\label{dwti_fullred}
\end{equation}

\noindent where

\begin{align}
&\theta^{\left(0\right)} = \frac{1}{\mathcal{Z}}\left(w^{\left(0\right)}-w^{\left(23\right)}+2\eta\right), &&\theta^{\left(1\right)} = \frac{1}{\mathcal{Z}}\left(w^{\left(1\right)}+w^{\left(13\right)}+w^{\left(23\right)}-\eta\right), \nonumber\\
&\theta^{\left(2\right)} = \frac{1}{\mathcal{Z}}\left(w^{\left(2\right)}+w^{\left(15\right)}+w^{\left(23\right)}-\eta\right), &&\theta^{\left(3\right)} = \frac{1}{\mathcal{Z}}\left(w^{\left(3\right)}+w^{\left(11\right)}+w^{\left(23\right)}-\eta\right), \nonumber\\
&\theta^{\left(4\right)} = \frac{1}{\mathcal{Z}}\left(w^{\left(4\right)}+w^{\left(13\right)}+w^{\left(21\right)}\right), &&\theta^{\left(5\right)} = \frac{1}{\mathcal{Z}}\left(w^{\left(5\right)}+w^{\left(15\right)}+w^{\left(17\right)}\right), \nonumber\\
&\theta^{\left(6\right)} = \frac{1}{\mathcal{Z}}\left(w^{\left(6\right)}+w^{\left(11\right)}+w^{\left(19\right)}\right), &&\theta^{\left(7\right)} = \frac{1}{\mathcal{Z}}\left(w^{\left(7\right)}+w^{\left(21\right)}+w^{\left(23\right)}+\eta\right), \nonumber\\
&\theta^{\left(8\right)} = \frac{1}{\mathcal{Z}}\left(w^{\left(8\right)}+w^{\left(17\right)}+w^{\left(23\right)}+\eta\right), &&\theta^{\left(9\right)} = \frac{1}{\mathcal{Z}}\left(w^{\left(9\right)}+w^{\left(19\right)}+w^{\left(23\right)}+\eta\right), \nonumber\\
&\theta^{\left(10\right)} = \frac{1}{\mathcal{Z}}\left(w^{\left(10\right)}-w^{\left(23\right)}-2\eta\right).
\label{thetaforms}
\end{align}

The formulas for each $\theta^{(\alpha)}$ given in \eqref{thetaforms} satisfy the normalization condition $\sum_{\alpha}\theta^{(\alpha)}=1$ from their construction. 
To ensure they are each $\theta^\alpha \ge 0$ requires $\eta$ to simultaneously satisfy four lower bounds and four upper bounds:
\begin{equation}
\begin{rcases}
\frac{1}{2}\left(w^{\left(23\right)}-w^{\left(0\right)}\right) \\
-\left(w^{\left(7\right)}+w^{\left(21\right)}+w^{\left(23\right)}\right) \\
-\left(w^{\left(8\right)}+w^{\left(17\right)}+w^{\left(23\right)}\right) \\
-\left(w^{\left(9\right)}+w^{\left(19\right)}+w^{\left(23\right)}\right)
\end{rcases} \le\eta\le \begin{cases}
\frac{1}{2}\left(w^{\left(10\right)}-w^{\left(23\right)}\right) \\
w^{\left(1\right)}+w^{\left(13\right)}+w^{\left(23\right)} \\
w^{\left(2\right)}+w^{\left(15\right)}+w^{\left(23\right)} \\
w^{\left(3\right)}+w^{\left(11\right)}+w^{\left(23\right)}
\end{cases}.
\label{etacons}
\end{equation}

All that is left to check is that this system of inequalities is feasible; that it admits a viable solution for $\eta$.
The bottom 3 bounds on either side of \eqref{etacons} are trivially consistent with each other, 
since each term on the left-hand side is negative while each term on the right-hand side is positive.
The upper 2 conditions complicate matters by being capable of switching between positive and negative. 
Note, however, that $\frac{1}{2}(w^{\left(23\right)}-w^{\left(0\right)})\le w^{(23)}$ and that each of the bottom 3 conditions on the right-hand side have a $w^{(23)}$.
Thus, the lower bound of $\frac{1}{2}(w^{\left(23\right)}-w^{\left(0\right)})$ is necessarily smaller than each of those 3 upper bounds. 
A similar analysis shows that the upper bound of $\frac{1}{2}(w^{\left(10\right)}-w^{\left(23\right)})$ is necessarily larger than each of the bottom 3 lower bounds.
The only point of consistency left to verify is that the upper 2 conditions are consistent with each other. To analyze these values, it is useful to note that $w^{(23)}$ is simply the geometric average of $w^{(0)}$ and $w^{(10)}$;
\begin{equation}
w^{\left(23\right)} = \sqrt{w^{\left(0\right)}w^{\left(10\right)}}.
\label{wgeomrel}
\end{equation}
With this, we can take the difference between $\frac{1}{2}(w^{\left(23\right)}-w^{\left(0\right)})$ and $\frac{1}{2}(w^{\left(10\right)}-w^{\left(23\right)})$ to show
\begin{equation}
\frac{1}{2}\left(w^{\left(10\right)}-w^{\left(23\right)}\right)-\frac{1}{2}\left(w^{\left(23\right)}-w^{\left(0\right)}\right) = \frac{1}{2}\left(w^{\left(0\right)}+w^{\left(10\right)}\right)-\sqrt{w^{\left(0\right)}w^{\left(10\right)}} = \frac{1}{2}\left(\sqrt{w^{\left(0\right)}}-\sqrt{w^{\left(10\right)}}\right)^{2} \ge 0.
\label{condiff}
\end{equation}
Equation (\ref{condiff}) proves that the upper bound of $\frac{1}{2}(w^{\left(10\right)}-w^{\left(23\right)})$ is always larger than the lower bound of $\frac{1}{2}(w^{\left(23\right)}-w^{\left(0\right)})$, meaning these conditions on $\eta$ are consistent with each other.
Furthermore, this proves that all conditions given in \eqref{etacons} can always be simultaneously satisfied, thus enabling an appropriately chosen value of $\eta$ to produce nonnegative values for each $\theta^{(\alpha)}$ in \eqref{thetaforms}.
This allows us to insert \eqref{dwti_fullred} into \eqref{dlamlogpi} to fully validate Eq. 6 of the main text for the 3 state graph.

\subsection*{Optimal uniquely-constructable tree families in the 3 site model}
In this section we construct all uniquely-constructable families for the three site binding model discussed in the main text.
Recall, the binding rates are varied uniformly and the unbinding edges are held fixed.
This is depicted Fig. \ref{unbinds} where the states are labelled according to their binary representation with $(000)$ being the fully unbound state and $(111)$ being the fully bound state while the red edges correspond to binding transitions ($d=1$) and purple edges unbinding transitions ($d=0$).

As identified in the main text, maximizing sensitivity requires a uniquely-constructable tree family where the tree rooted at $(000)$ contains only unbinding edges while the tree rooted at $(111)$ contains only binding edges. 
Beginning with the $(000)$-rooted tree, we can deduce all possible unbinding-only trees and separate them into 4 types, depicted in Fig. \ref{unbinds}.
Observe that there is only one possible unbinding edge from all the single-bound states  $\{(001), (010), (100)\}$, so any optimal $(000)$-rooted tree has those edges.
For the double-bound states $\{(011), (101), (110)\}$, there are two unbinding edges leaving, so we must choose one.
This leads to two distinct possibilities.
(i) Each double-bound state leads into a distinct single-bound state, such as shown in the Type I trees, or (ii) two of the double-bound states could lead into the same single bound state, such as in the Types II-IV trees. 

In the case of the Type I trees in option (i), once the edges leaving the double-bound states are chosen, the fully bound state $(111)$ can leave via any of the three edges connected to it as they are all unbinding edges. 
The specific tree depicted as Type I in Fig. \ref{unbinds} is one example of this, but there are in total 6 different trees that all fit this description. 
The remaining 5 can be obtained by simply permuting the ordering of the binding sites, which is geometrically equivalent to rotations and/or reflections of the state space cube that keep the $(000)$ and $(111)$ states fixed.

For option (ii), where two of the double-bound states lead into the same single bound state, the placement of the edge leaving the remaining double-bound state and the fully bound-state dictate the type of tree. 
Type II trees have the fully bound state leading into the isolated double-bound state. Once again, the tree specifically depicted in Fig.~\ref{unbinds} is one of 6 such possibilities, with the remaining ones obtainable through permutation. 
Type III and IV trees have the fully-bound state leading into one of the double-bound states that share a single bound target state with the difference between the two being whether the direction of the edge leaving the fully-bound state is perpendicular or parallel to that edge leaving the isolated double-bound state. 
Yet again, each of these representative trees belongs to a set of 6 that are all permutations of each other. 
In all, there are 24 possible trees that contain only unbinding edges (4 types with 6 trees per type).

At the same time, we need to find a corresponding $(111)$-rooted tree that both contains only binding edges and is uniquely-constructable with the $(000)$-rooted tree. 
This process is depicted in Fig. \ref{type1}. 
Given the depicted $(000)$-rooted tree, we begin constructing the $(111)$-rooted tree by incorporating the three binding edges leaving the double bound states. 
These must exist as they are the only binding edges that leave those states. 
We then focus on the binding edges that leave $(010)$ and $(100)$ states. We note that the $(010)$  state must leave through the $\{(100)\to(101)\}$ edge since if the $\{(100)\to (110)\}$ edge was used then the edge leaving the $(101)$ state could be exchanged between the two trees, thus making them not uniquely constructable. 
Similar logic dictates that the $(010)$ state must leave via the $\{(010)\to(110)\}$ edge so as to prevent the edge leaving the $(110)$ state from being exchangeable. 
From here we immediately encounter a problem when trying to deduce which edge should exit the $(000)$ state. 
Note that we can simultaneously exchange the edges leaving the $(010)$ and $(110)$ states between the two trees to produce a viable $(000)$-rooted tree and a potentially viable partial $(111)$-rooted tree. 
In order to stop this reconstruction from being allowed, the $(000)$ state must leave via the $\{(000)\to (010)\}$ edge. 
However, if we incorporate this edge, then we can simultaneously exchange the edges leaving the $(100)$ and $(101)$ states instead and produce two viable trees. 
This incompatibility means that $(000)$-rooted trees of Type I have no binding-only $(111)$-rooted trees with which they are uniquely constructable.

Performing a similar analysis on the trees of Type II yields a similar result almost immediately (Fig. \ref{type2}). 
We again begin constructing the $(111)$-rooted tree by adding the binding edges at each double bound state. 
This immediately leads to a contradiction when attempting to discern which edge exits the $(100)$ state. 
If it leaves via the $(100)\to(101)\}$ edge, then the edges leaving the $(110)$ state may be exchanged between the two trees. However, if it leaves via the $\{(100) \to (110)\}$ edge to prevent this, then the edges leaving the $(101)$ state may be exchanged. 
In either case, uniquely constructability is not achieved, meaning the $(000)$-rooted trees of Type II also have no binding-only $(111)$-rooted trees with which they are uniquely constructable.

In analyzing the Type III trees we encounter our first success (Fig. \ref{type3}).
Once the necessary edges at each double-bound states are added, we can consider all remaining possibilities. 
We first note that regardless of which edge exits the $(010)$ state it will be exchangable with an edge from the $(000)$-rooted tree unless the $(000)$ state leaves via the $\{(000) \to (010)\}$ edge. Additionally, the $(001)$ state must leave via the $(001)\to (011)$ edge to prevent an exchange at the $(011)$, while the $(100)$ state must leave via the $\{(100\to(101)\}$ edge to prevent an exchange at the $(101)$ state. 
This leaves only the edge leaving the $(010)$ state to be determined. 
To do so, we note that if the $(010)$ state were to leave via the $\{(010)\to(110)\}$ edge, then the edges leaving the $(001)$ and $(011)$ state could be exchanged. 
Utilizing the $\{(010\to(011)\}$ edge instead prevents this from happening. 
The resultant $(111)$-rooted tree is then uniquely constructable with the initial $(000)$-rooted tree. 
The remaining Type III $(000)$-rooted trees will be uniquely constructable with $(111)$-rooted trees via permutations of the tree depicted in Fig. \ref{type3}. 
These will ultimately belong to the twisted hysteresis motif discussed in the main text.

Finally, we analyze the Type IV trees and find yet another success (Fig. \ref{type4}). 
We again add the necessary edges at each double-bound state and consider all remaining possibilities. 
In much the same manner as in the case of the Type III trees, we find that the $(000)$ state must leave via the $\{(000)\to(001)\}$ edge to prevent exchangeability at the $(001)$ state, the $(010)$ state must leave via the $\{(010)\to(011)\}$ edge to prevent exchangeability at the $(011)$ state, and the $(100)$ state must leave via the $\{(100)\to(101)\}$ edge to prevent exchangeability at the $(101)$ state. With those edges in place, the edge leaving the $(001)$ state must be the $\{(001)\to(011)\}$ edge so as to prevent the possibility of a simultaneous exchange at the $(010)$ and $(011)$ states. With this, we have again created a $(111)$-rooted tree that is uniquely constructable with the initial $(000)$-rooted tree (with similar trees for other Type IV trees obtained through permutations). These follow the nested hysteresis motif.

We have thus shown that of the original 24 $(000)$-rooted trees with all unbinding edges, only 12 have corresponding uniquely constructable binding-only $(111)$-rooted trees. 
The trees rooted at the remaining states can be built using a similar analysis style to show that each of these pairs leads to a full uniquely-constructable tree family of the nested or twisted hysteresis form (with six of each all obtainable from each other through permutations). These are all depicted in Figs. \ref{3bsvert11}-\ref{3bsvert10}.

\begin{figure}[h]
\centering
\includegraphics[width=.8\textwidth]{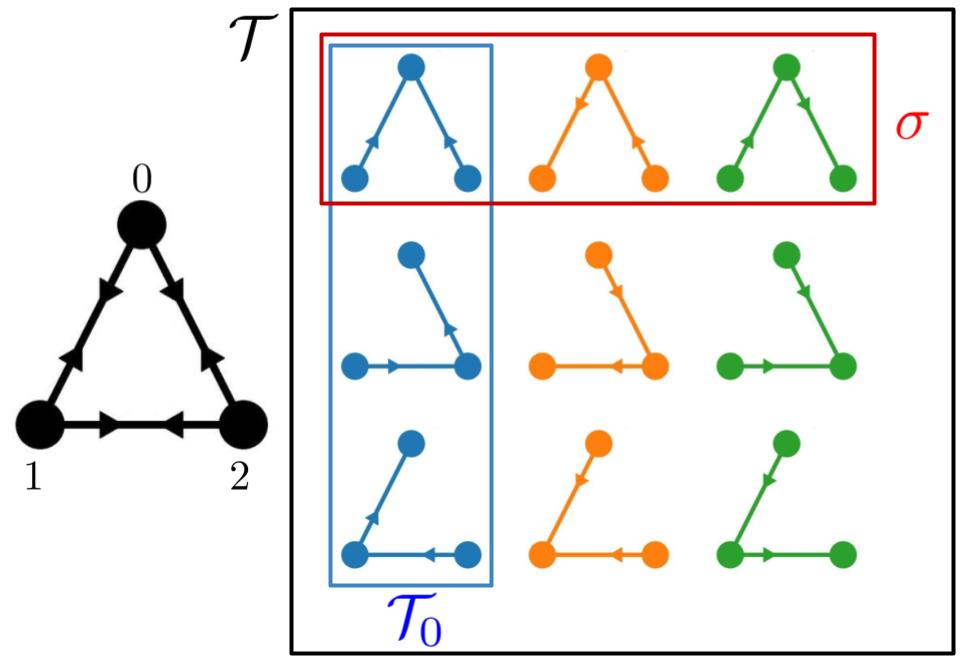}
\caption{All 9 rooted spanning trees for the three state system. The set of all such trees forms $\mathcal{T}$, while $\mathcal{T}_{0}$ is comprised of only those rooted at state $0$ (the top state). An example tree family, $\sigma$, is highlighted; one of 27 possible such tree families.}
\label{treedefs}
\end{figure}

\begin{figure}
\centering
\includegraphics[width=.85\linewidth]{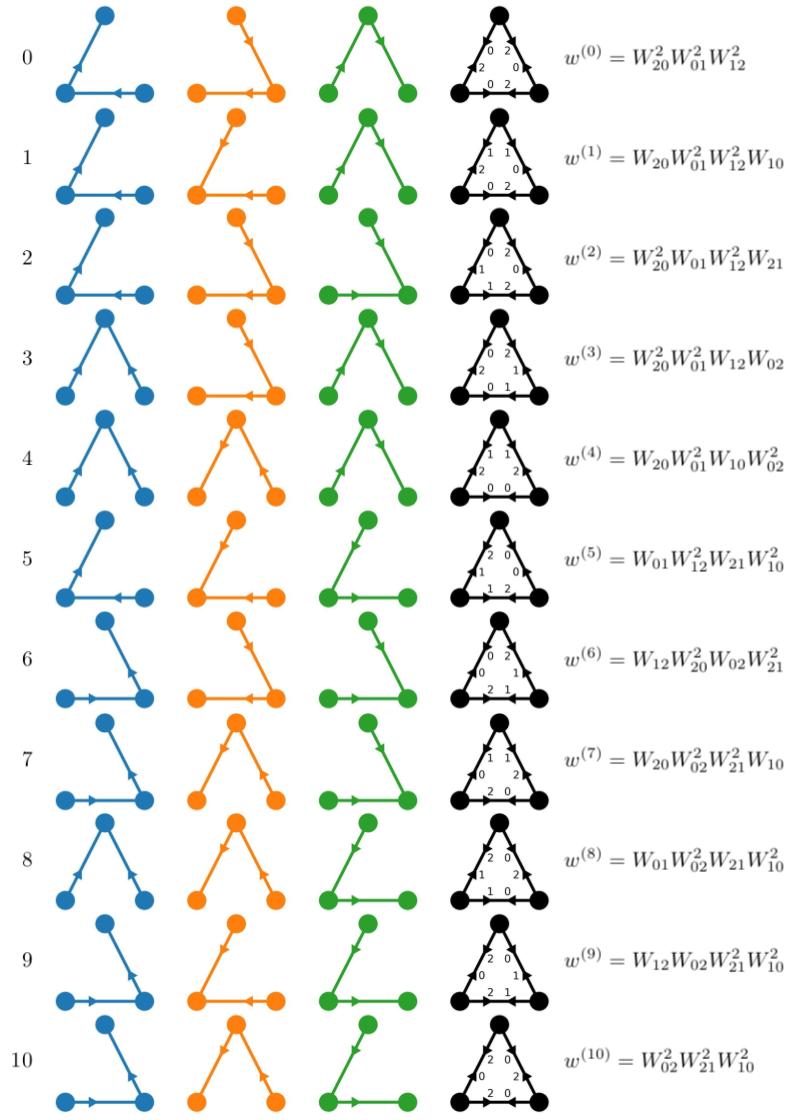}
\caption{The 11 uniquely constructable tree families. Each is given an identification number (left column) and presented such that the blue trees are rooted at the 0 state (top), the orange trees are rooted at the 1 state (left), and the green trees are rooted at the 2 state (right). The edge mulitiplicity graphs (black) then show the number of times each directed edge is used throughout the whole family, and the weight of each family is presented (right column).}
\label{treefams_uc}
\end{figure}

\begin{figure}
\centering
\includegraphics[width=.85\linewidth]{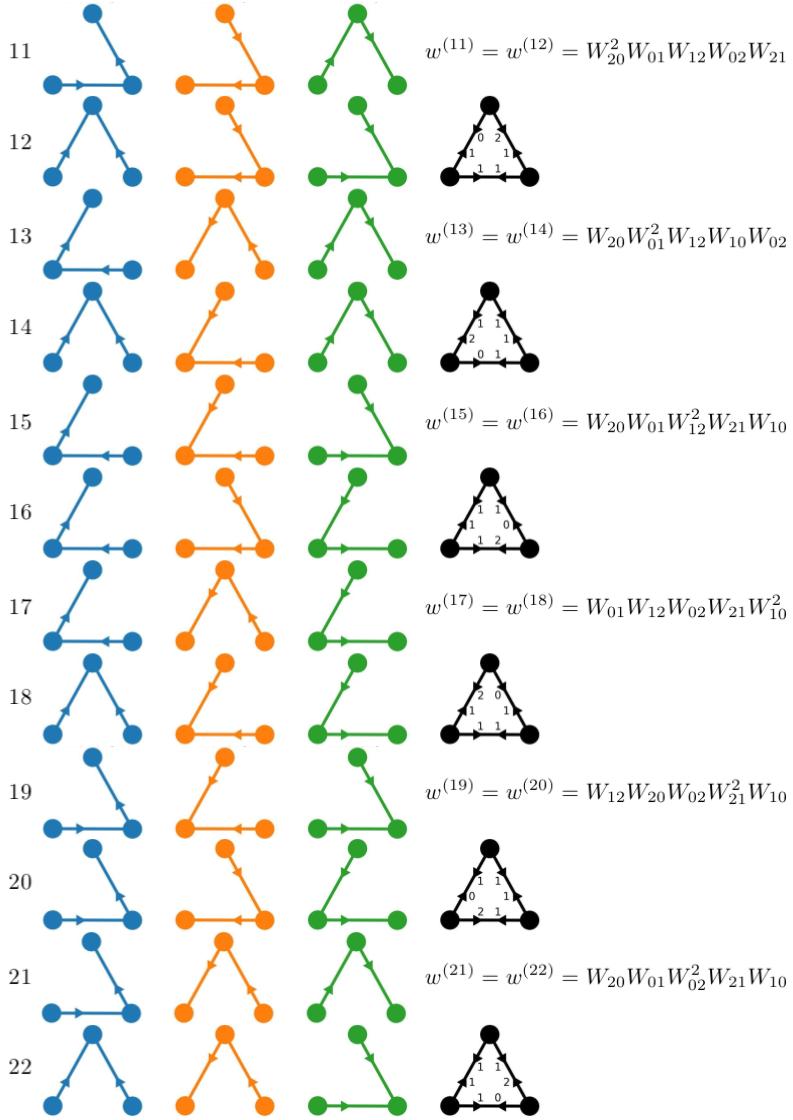}
\caption{The 12 multi-constructable tree families that belong to the 6 equivalency classes of size 2. Identification numbering and tree color coding are the same as in Fig. \ref{treefams_uc}. Every two rows represents a distinct equivalency class. Since the tree families within an equivalency class share the same constituent edges, the weights and edge multiplicity graphs are identical.}
\label{treefams_mc2}
\end{figure}

\begin{figure}
\centering
\includegraphics[width=.95\linewidth]{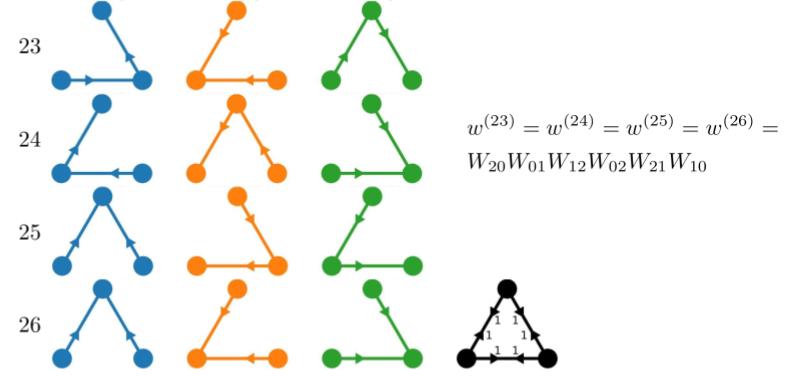}
\caption{The 4 multi-constructable tree families that belong to the single equivalency class of size 4.}
\label{treefams_mc4}
\end{figure}

\begin{figure}
\centering
\includegraphics[width=.95\linewidth]{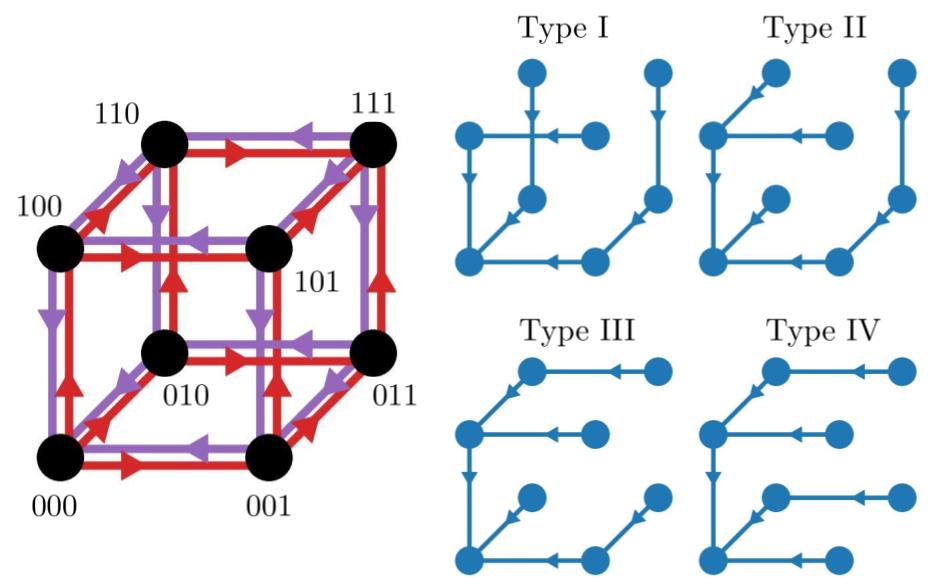}
\caption{Left) General three site binding model with binary labels. Red edges are binding edges while purple edges are unbinding edges. Right) The four types of 000-rooted trees which contain only binding edges. Type I has all double bound states leading into distinct single bound states. Type II has two double bound states leading into the same single bound state and the fully bound state leading into the remaining double bound state. Types III and IV have two double bound states leading into the same single bound state and the fully bound state leading into one of them in a manner perpendicular or parallel to the edge leaving the remaining double bound state. Each type represents 6 different trees obtained through rotations and reflections of the depicted tree.}
\label{unbinds}
\end{figure}

\begin{figure}
\centering
\includegraphics[width=.95\linewidth]{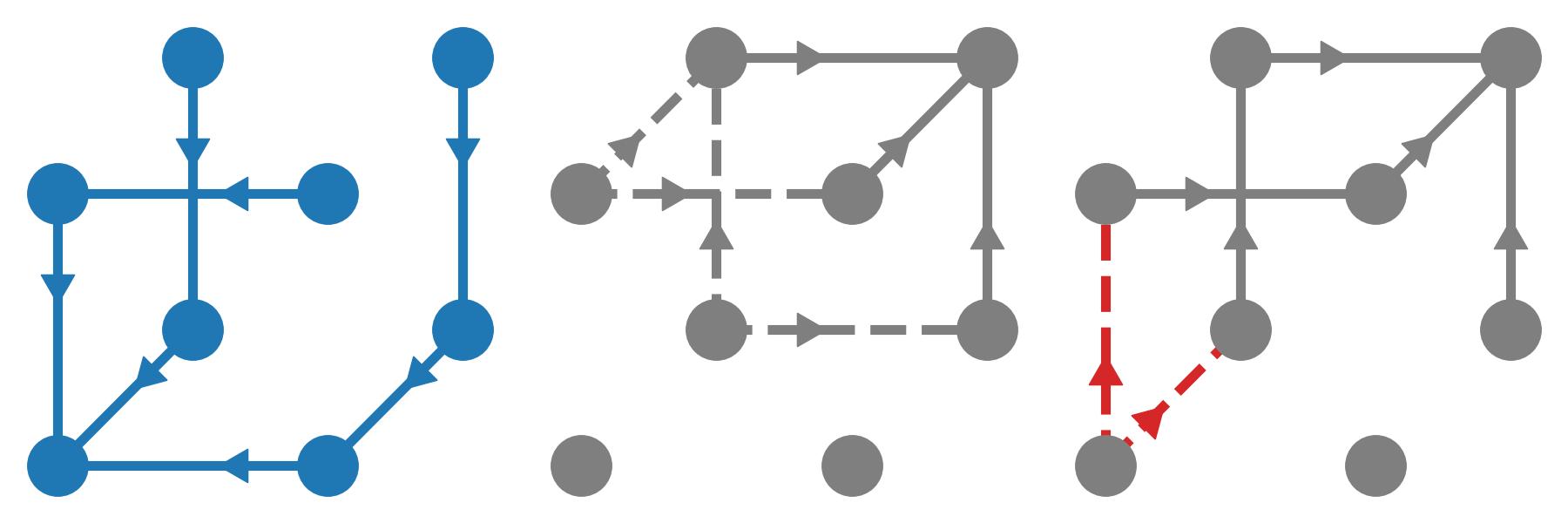}
\caption{Attempted construction of two uniquely constructable trees. Given the Type I 000-rooted tree (blue), we begin constructing a 111-rooted tree (grey). Uniquely constructability dictates which edges the 010 and 100 states must leave by, but this creates an inherent contradiction when attempting to discern which edge the 000 state must leave by.}
\label{type1}
\end{figure}

\begin{figure}
\centering
\includegraphics[width=.63\linewidth]{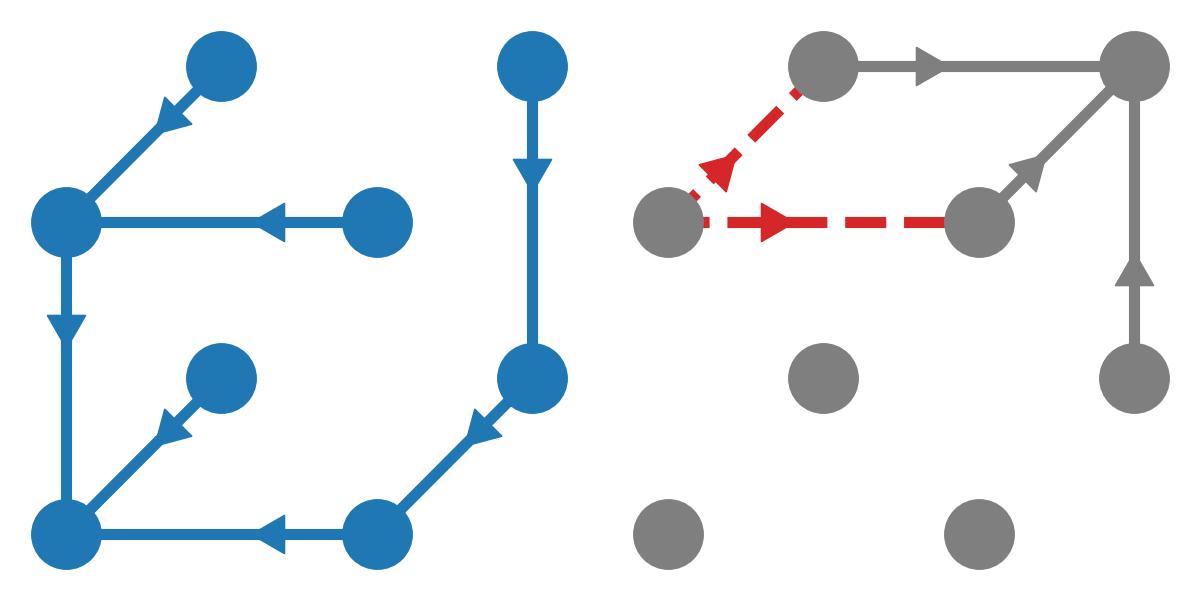}
\caption{Attempted construction of two uniquely constructable trees. Given the Type II 000-rooted tree (blue), we begin constructing a 111-rooted tree (grey). A contradiction is immediately encounted upon examining the possible edges by which the 100 state may leave. Either option will cause the trees to not be uniquely constructable.}
\label{type2}
\end{figure}

\begin{figure}
\centering
\includegraphics[width=.63\linewidth]{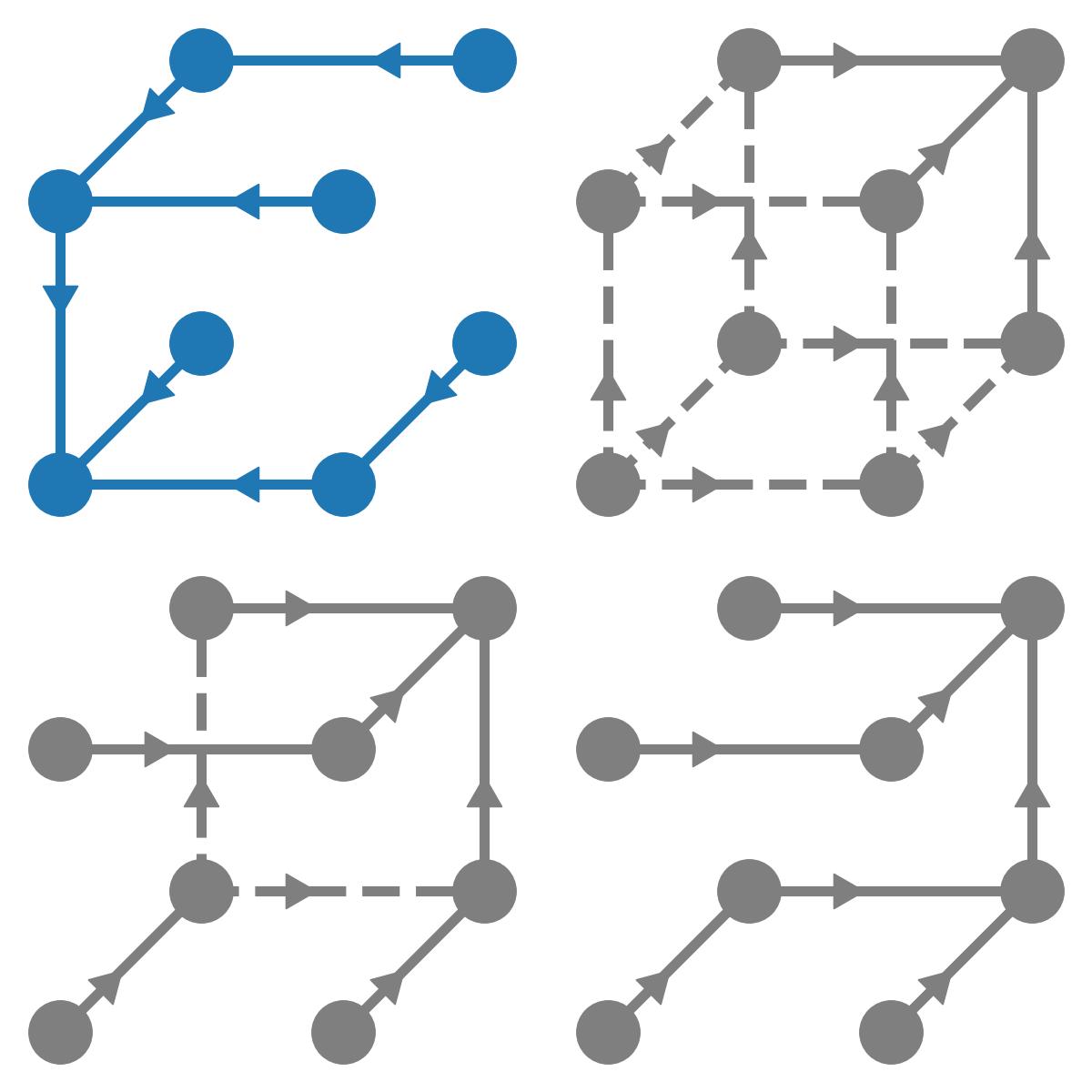}
\caption{Attempted construction of two uniquely constructable trees. Given the Type III 000-rooted tree (blue), we begin constructing a 111-rooted tree (grey). The edges leaving the 000, 001, and 100 state can be deduced so as to prevent exchangability at the 010, 011, and 101 states. The edge leaving the 010 state can then be chosen to prevent simultaneous exchangability at the 001 and 011 states. The resultant trees are uniquely constructable.}
\label{type3}
\end{figure}

\begin{figure}
\centering
\includegraphics[width=.63\linewidth]{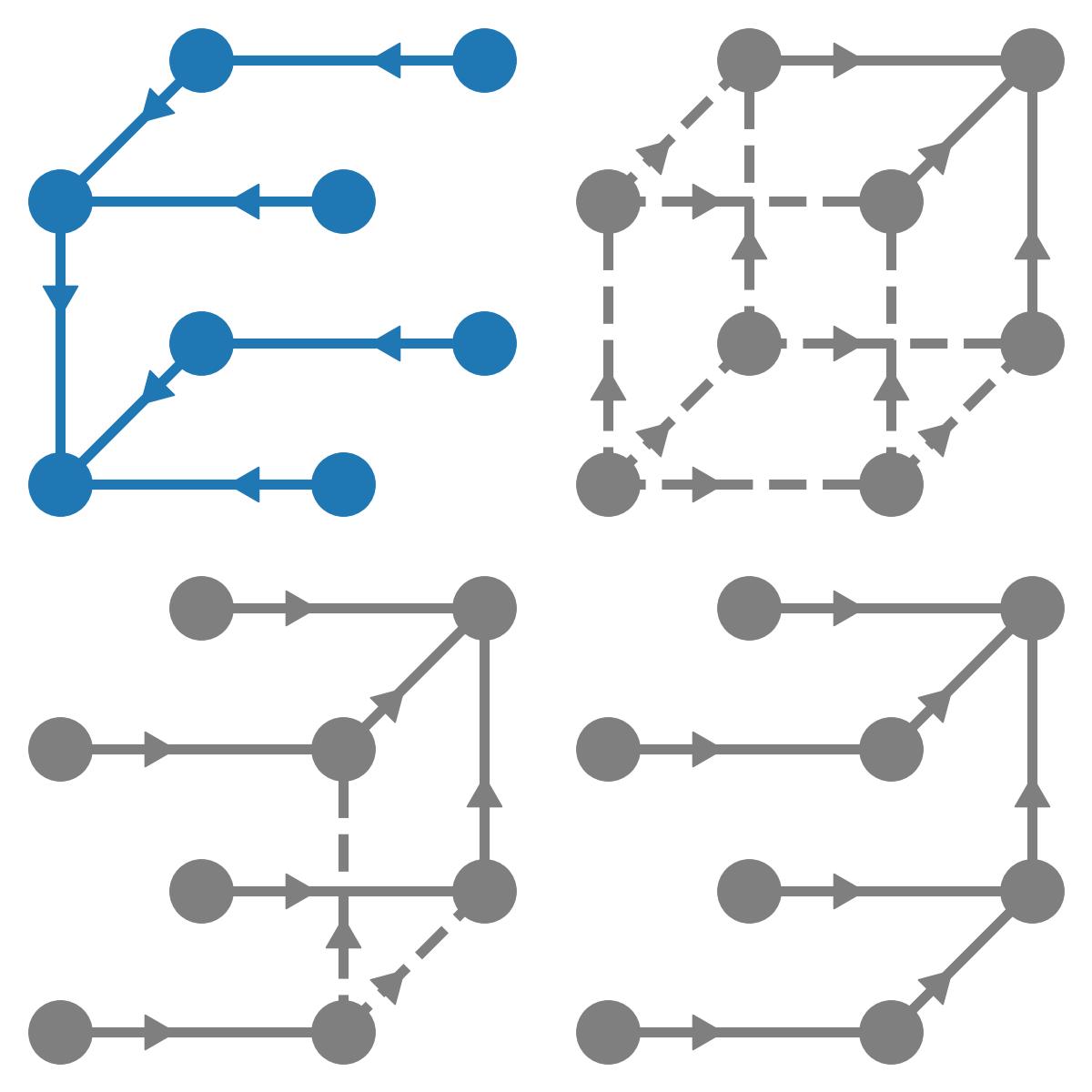}
\caption{Attempted construction of two uniquely constructable trees. Given the Type IV 000-rooted tree (blue), we begin constructing a 111-rooted tree (grey). The edges leaving the 000, 010, and 100 state can be deduced so as to prevent exchangability at the 001, 011, and 101 states. The edge leaving the 001 state can then be chosen to prevent simultaneous exchangability at the 010 and 011 states. The resultant trees are uniquely constructable.}
\label{type4}
\end{figure}

\begin{figure}
\centering
\includegraphics[width=.95\linewidth]{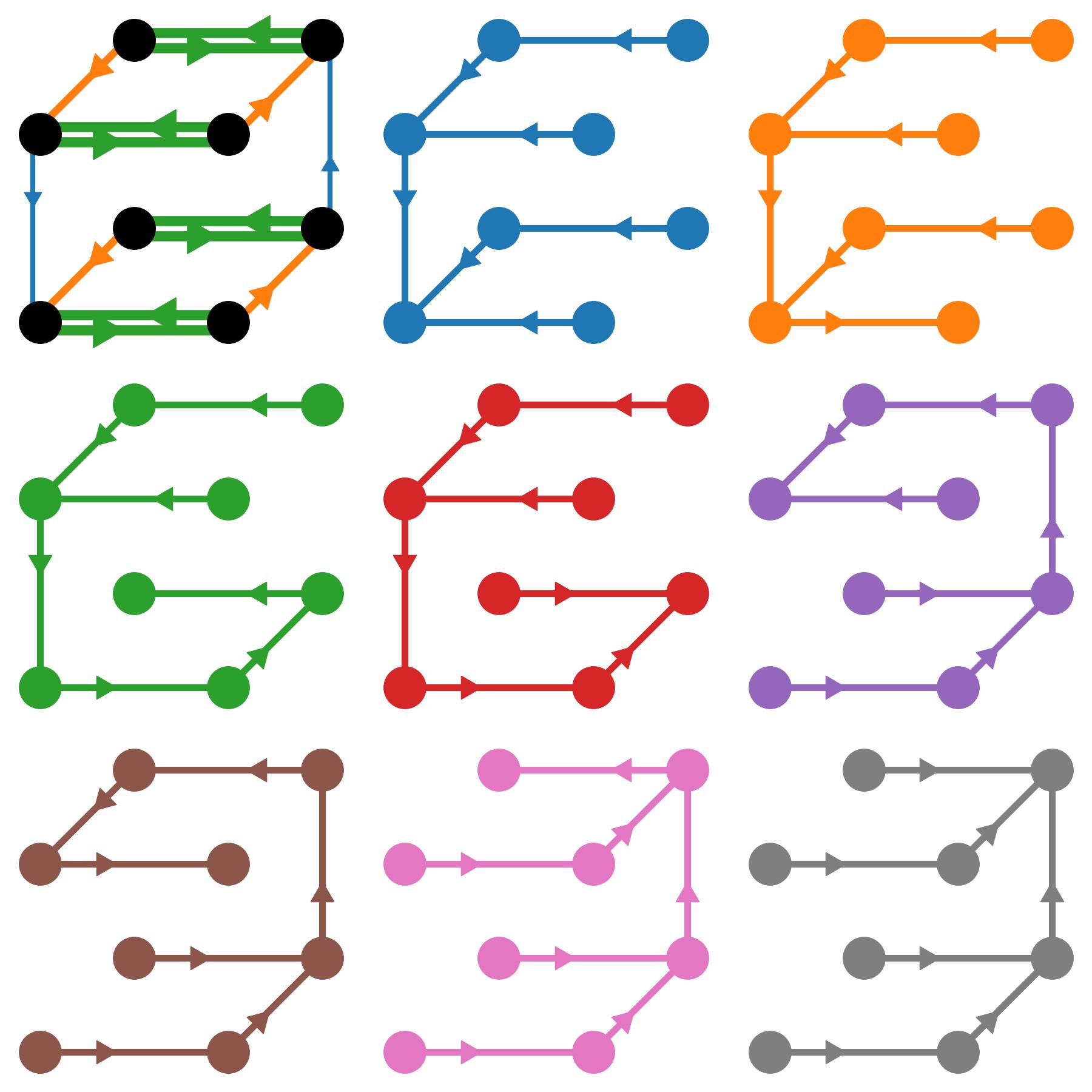}
\caption{A nested hysteresis uniquely constructable tree family.}
\label{3bsvert11}
\end{figure}

\begin{figure}
\centering
\includegraphics[width=.95\linewidth]{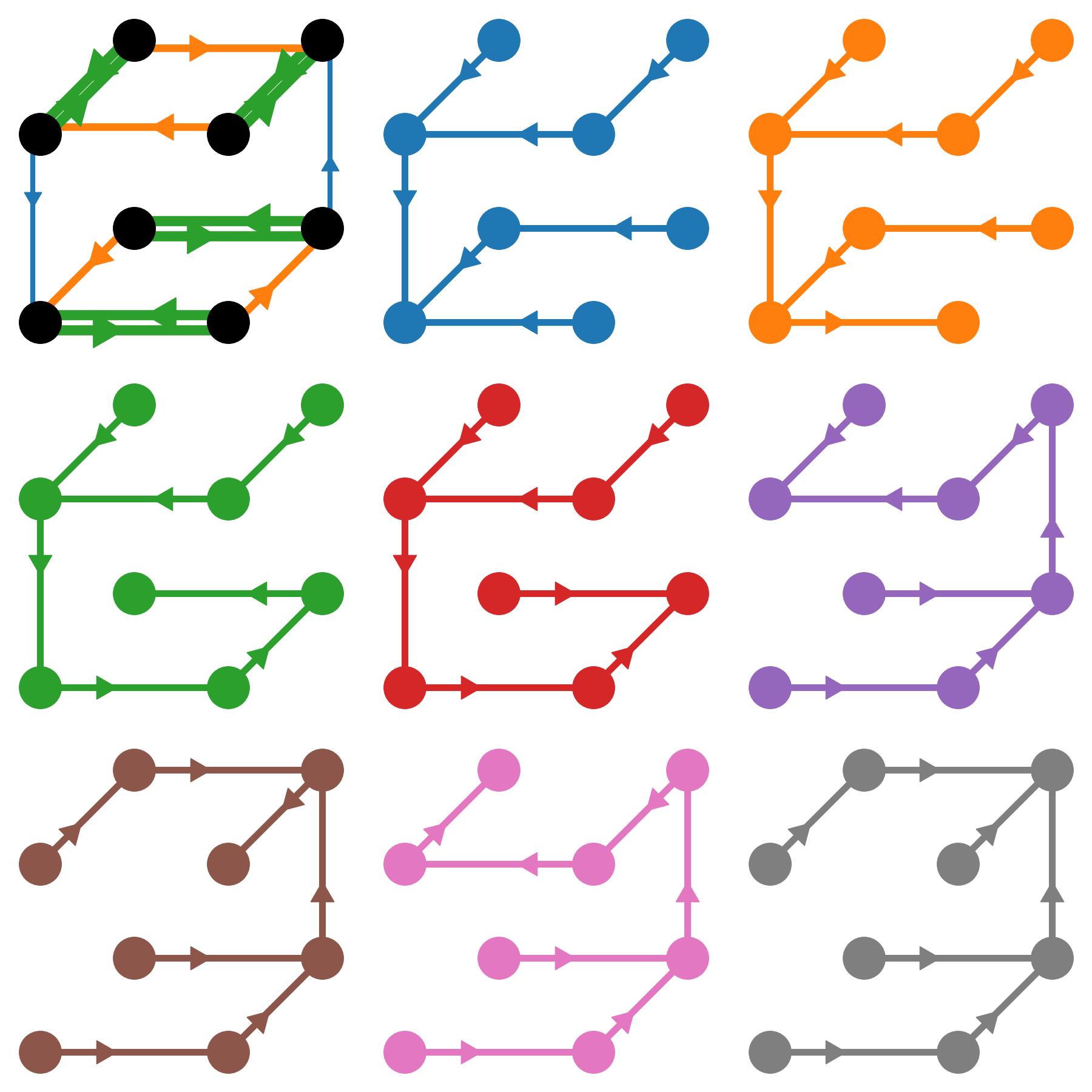}
\caption{A twisted hysteresis uniquely constructable tree family.}
\label{3bsvert07}
\end{figure}

\begin{figure}
\centering
\includegraphics[width=.95\linewidth]{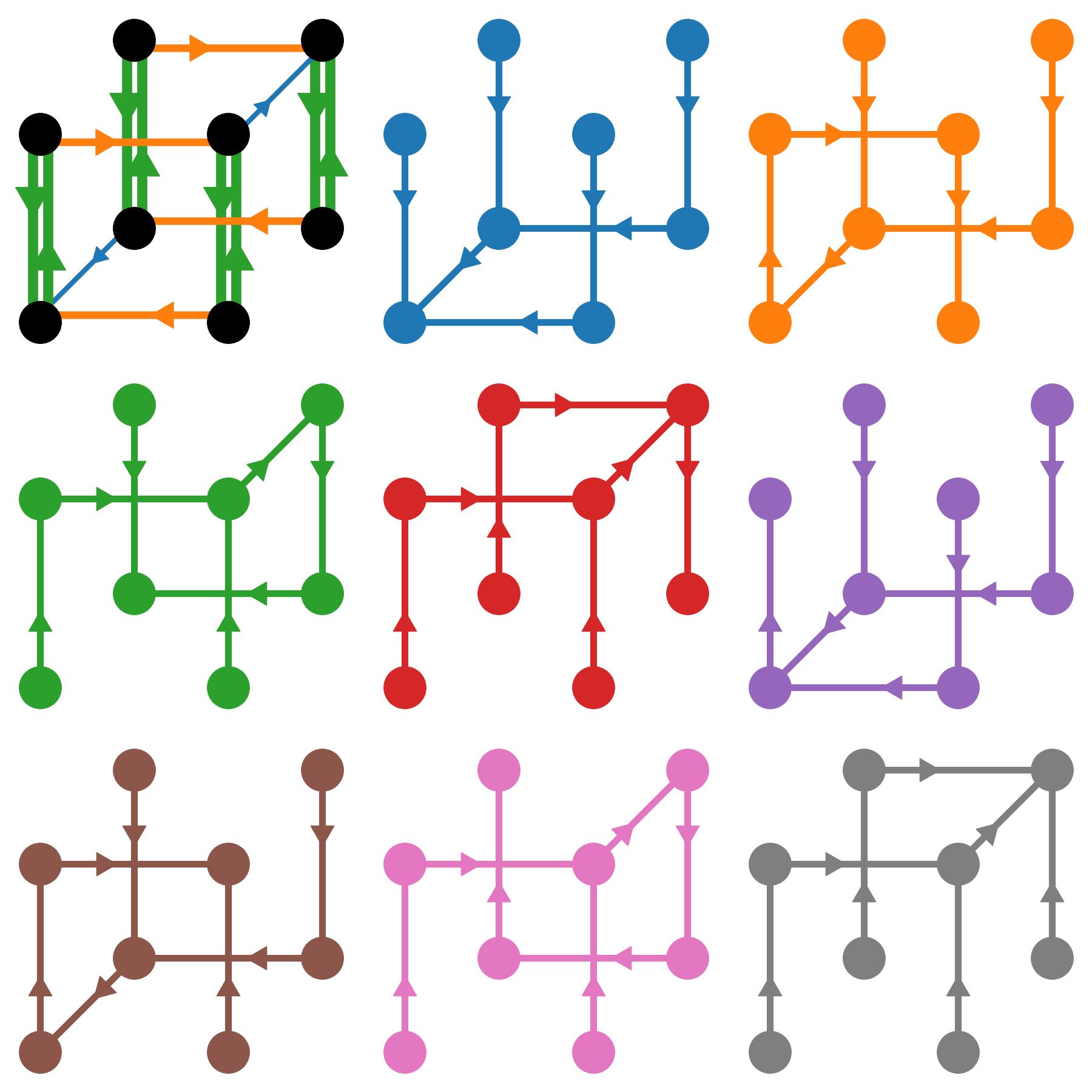}
\caption{A nested hysteresis uniquely constructable tree family.}
\label{3bsvert01}
\end{figure}

\begin{figure}
\centering
\includegraphics[width=.95\linewidth]{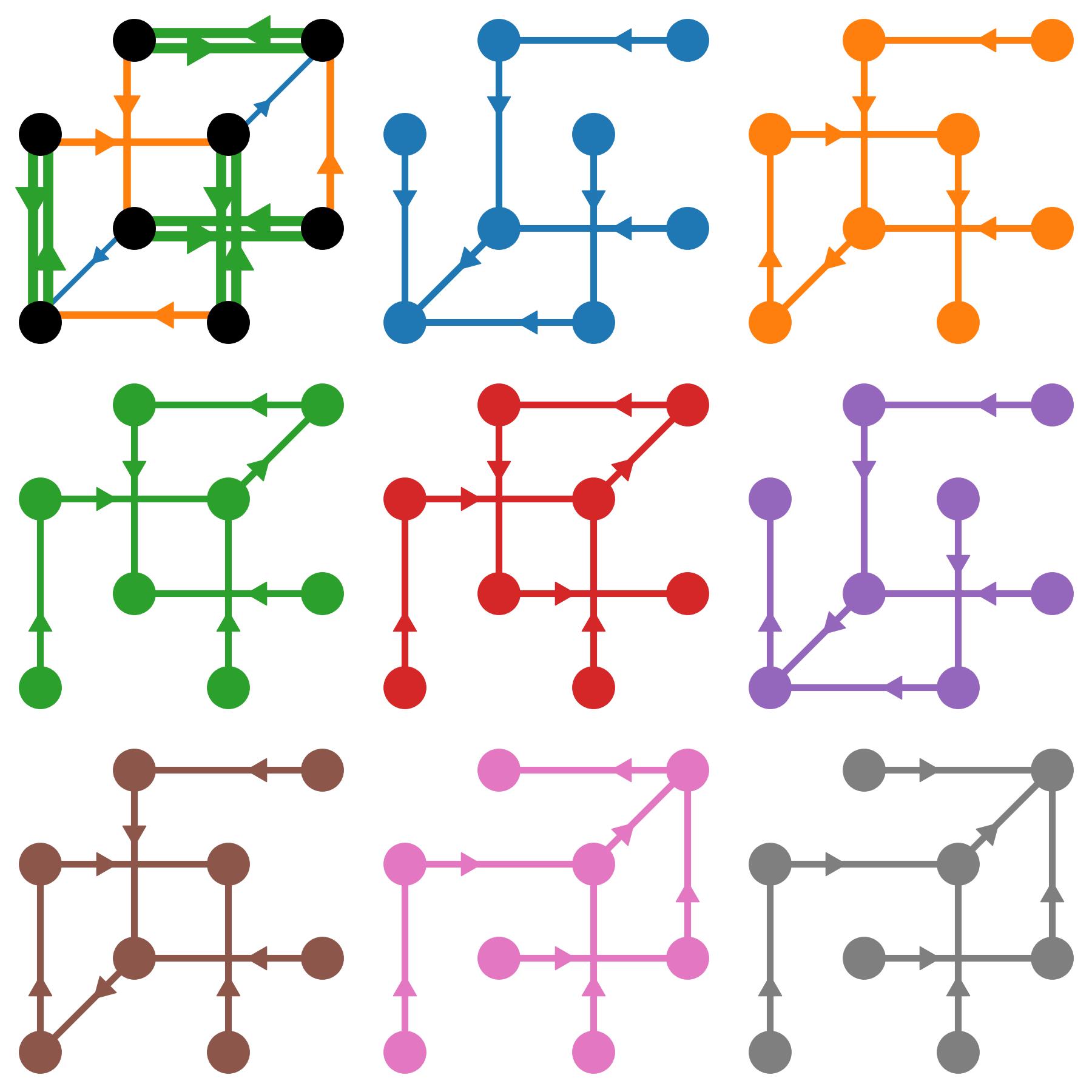}
\caption{A twisted hysteresis uniquely constructable tree family.}
\label{3bsvert08}
\end{figure}

\begin{figure}
\centering
\includegraphics[width=.95\linewidth]{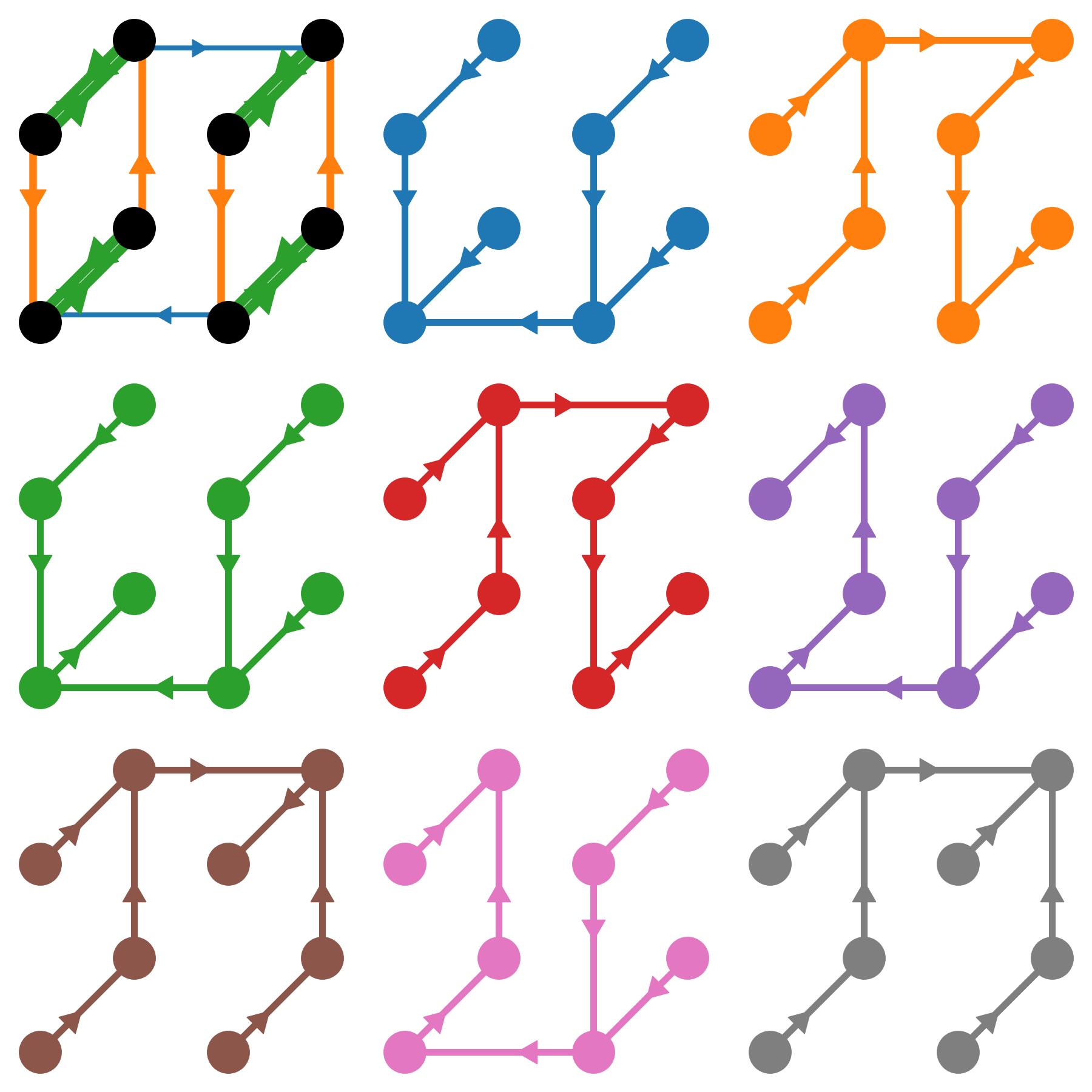}
\caption{A nested hysteresis uniquely constructable tree family.}
\label{3bsvert05}
\end{figure}

\begin{figure}
\centering
\includegraphics[width=.95\linewidth]{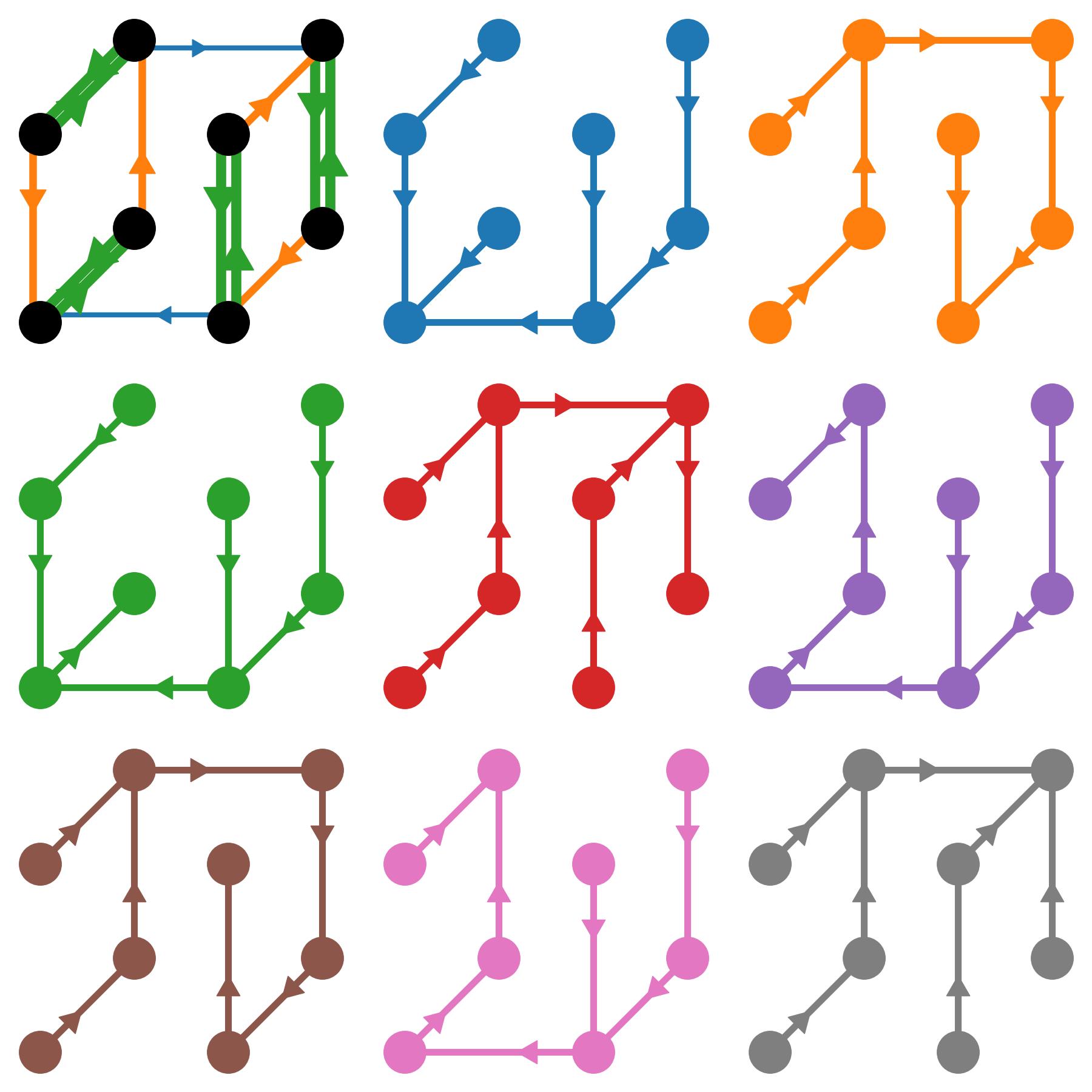}
\caption{A twisted hysteresis uniquely constructable tree family.}
\label{3bsvert03}
\end{figure}

\begin{figure}
\centering
\includegraphics[width=.95\linewidth]{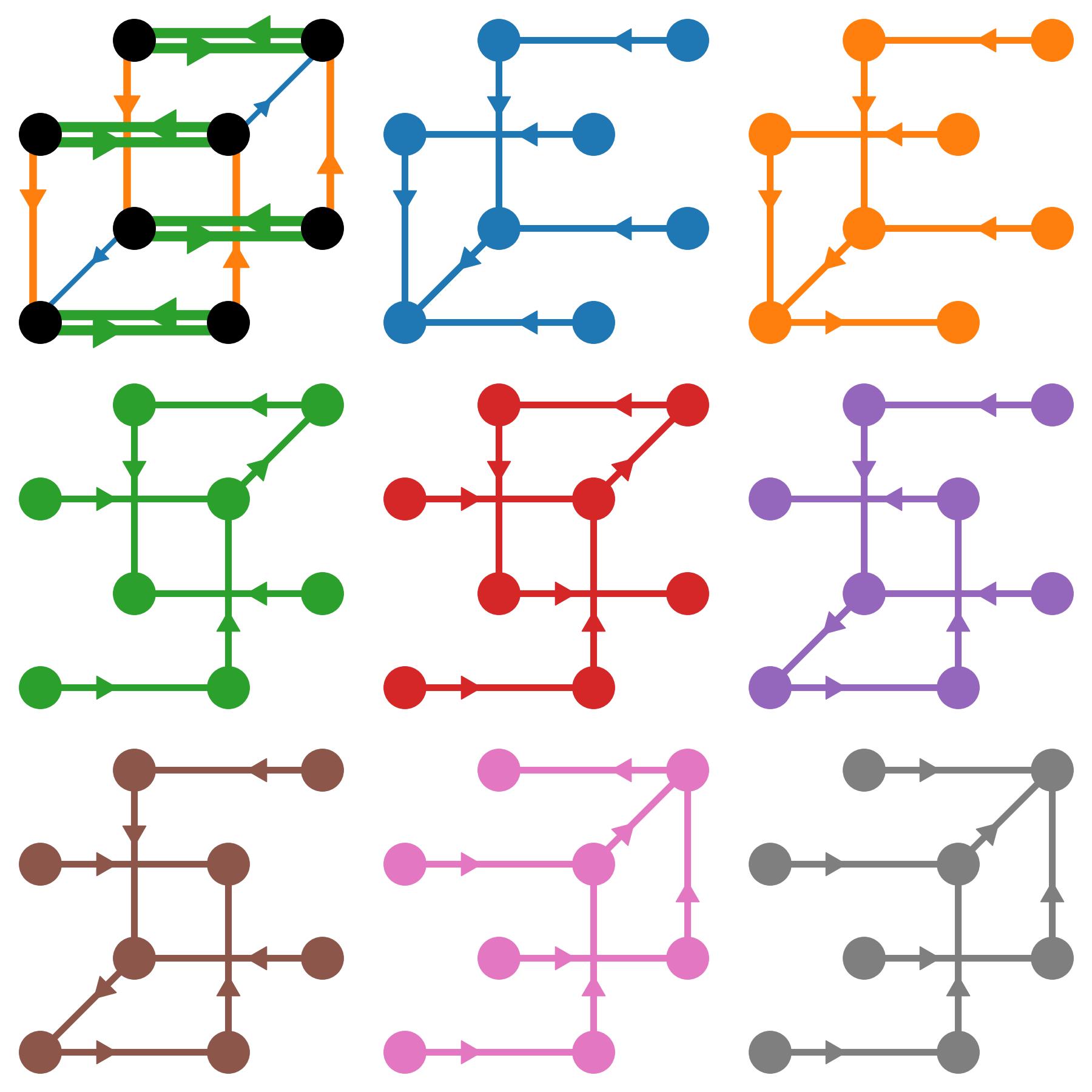}
\caption{A nested hysteresis uniquely constructable tree family.}
\label{3bsvert09}
\end{figure}

\begin{figure}
\centering
\includegraphics[width=.95\linewidth]{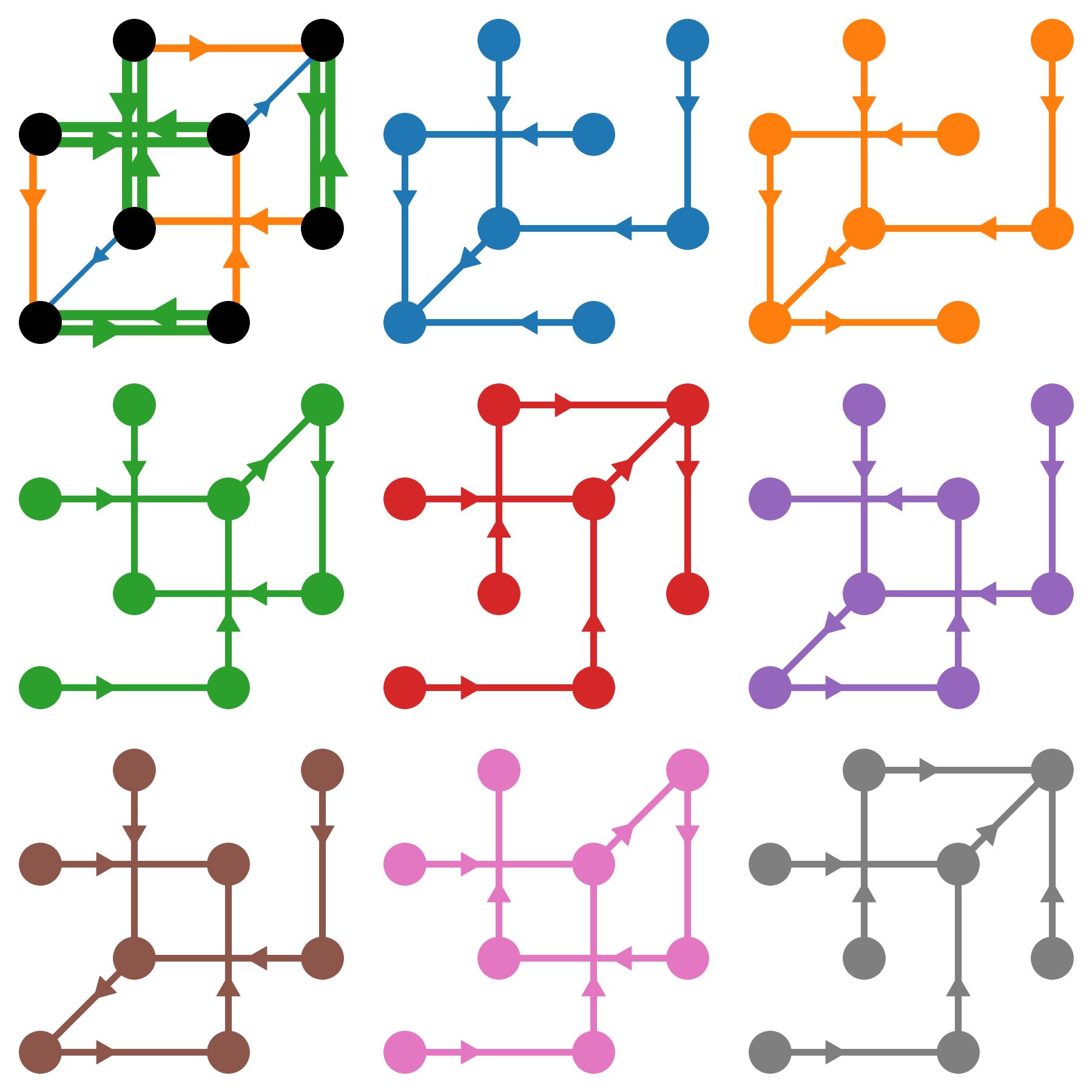}
\caption{A twisted hysteresis uniquely constructable tree family.}
\label{3bsvert02}
\end{figure}

\begin{figure}
\centering
\includegraphics[width=.95\linewidth]{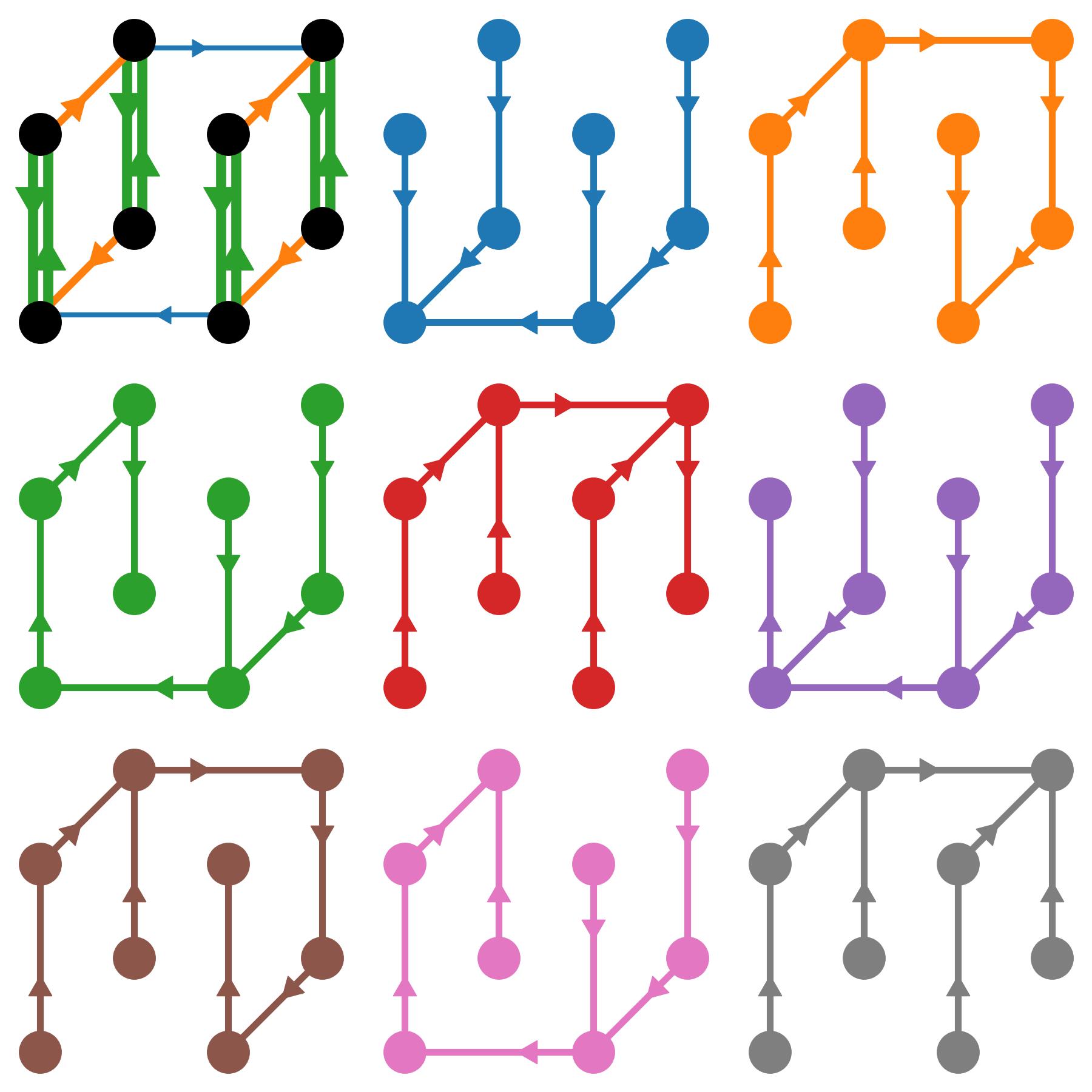}
\caption{A nested hysteresis uniquely constructable tree family.}
\label{3bsvert00}
\end{figure}

\begin{figure}
\centering
\includegraphics[width=.95\linewidth]{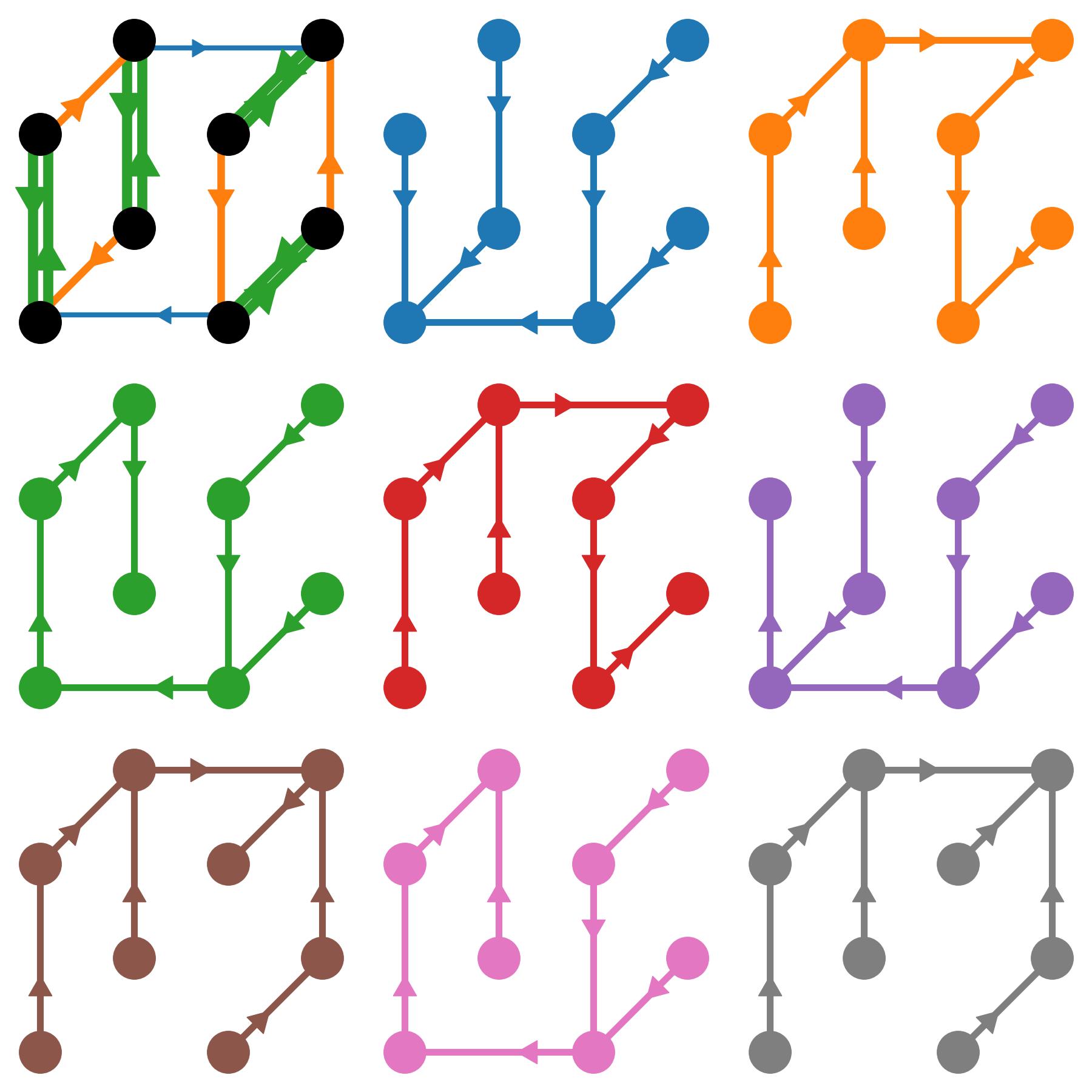}
\caption{A twisted hysteresis uniquely constructable tree family.}
\label{3bsvert04}
\end{figure}

\begin{figure}
\centering
\includegraphics[width=.95\linewidth]{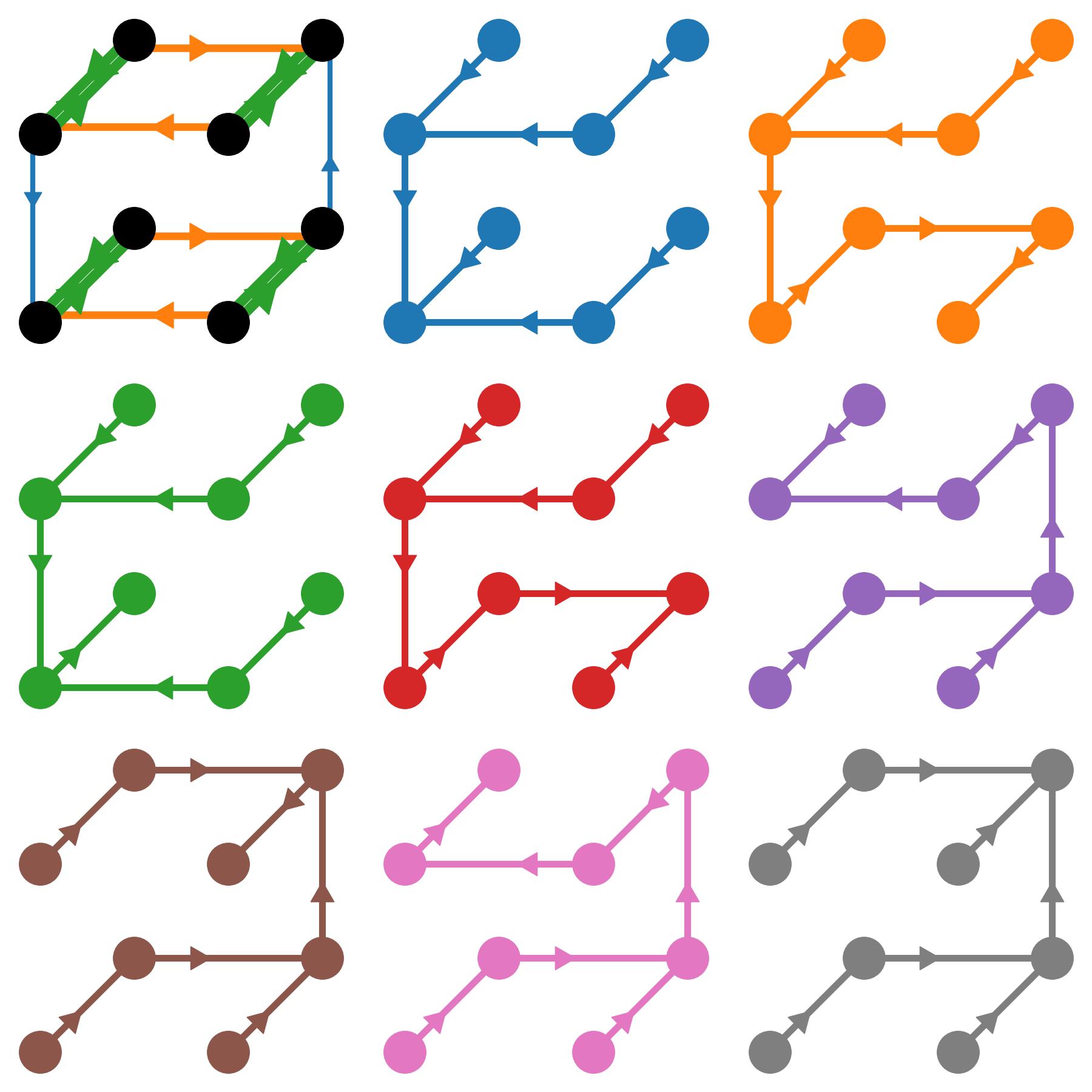}
\caption{A nested hysteresis uniquely constructable tree family.}
\label{3bsvert06}
\end{figure}

\begin{figure}
\centering
\includegraphics[width=.95\linewidth]{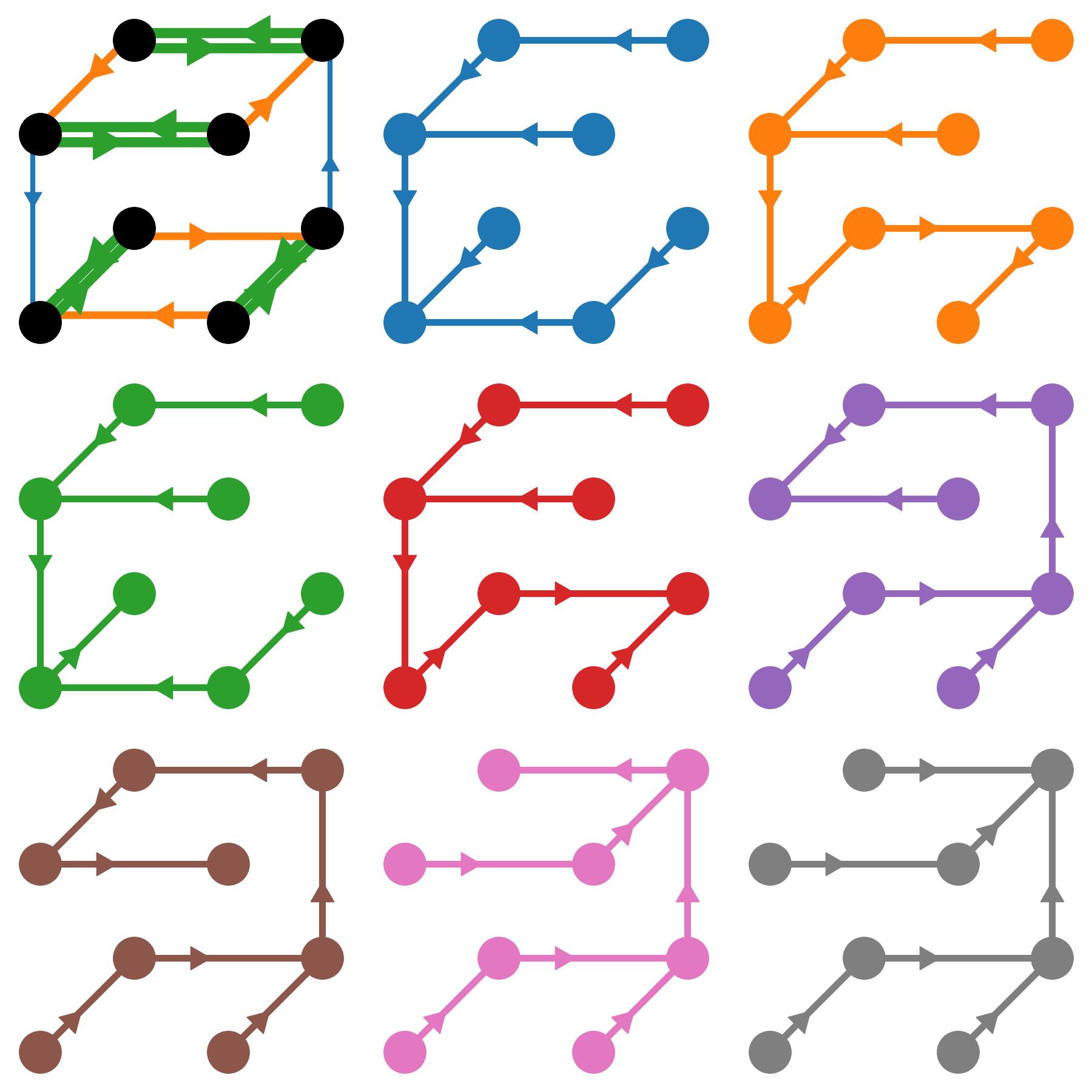}
\caption{A twisted hysteresis uniquely constructable tree family.}
\label{3bsvert10}
\end{figure}

\end{document}